\theoremstyle{plain}
\newtheorem{thm}{Theorem}
\newtheorem{lem}{Lemma}
\newtheorem{fact}{Fact}
\begin{document}
\title{All genuinely entangled stabilizer subspaces are multipartite fully nonlocal}
\author{Owidiusz Makuta}
\affiliation{Center for Theoretical Physics, Polish Academy of Sciences, Aleja Lotnik\'{o}w 32/46, 02-668 Warsaw, Poland}

\author{Remigiusz Augusiak}
\affiliation{Center for Theoretical Physics, Polish Academy of Sciences, Aleja Lotnik\'{o}w 32/46, 02-668 Warsaw, Poland}

\begin{abstract}
Understanding which entangled states give rise to Bell nonlocality and thus are resourceful in the device-independent framework is a long-stanging unresolved problem. Here we establish the equivalence between genuine entanglement and genuine nonlocality for a broad class of multipartite (pure and mixed) states originating from the stabilizer formalism. In fact, we prove that any (mixed) stabilizer state defined on a genuinely entangled subspace is multipartite fully nonlocal meaning that it gives rise to correlations with no contribution from local hidden variable models of any type. Importantly, we also derive a lower bound on genuine nonlocality content of arbitrary multipartite states, opening the door to its experimental estimation. 
\end{abstract}

\maketitle

\textit{Introduction.}
Quantum entanglement and Bell nonlocality are some of the most characteristic features of quantum theory. Moreover, they are a cornerstone of quantum information science enabling several applications unachievable in classical physics such as quantum teleportation \cite{Werner_2001}, quantum cryptography \cite{BB84,E91} or certification of random numbers \cite{di4}. In some of these applications such as the latter two, Bell nonlocality proves to be a stronger resource as it allows to process information in the device-independent framework \cite{PhysRevLett.98.230501,di4} where no assumptions are made on the internal working on the devices except that they obey the rules of quantum theory.

It is well-known that both these resources are deeply related as Bell nonlocality can only be obtained from entangled states; yet they are not fully equivalent. Indeed, already in the bipartite case there exist mixed entangled states which do not violate any Bell inequality such as e.g. the well-known Werner states \cite{PhysRevA.40.4277,PhysRevA.65.042302}, despite the fact that in the case of pure states, entanglement implies Bell non-locality \cite{GISIN1991201,GISIN199215}. In fact, understanding which entangled states can give rise to Bell violations and thus be a resource in the device-independent framework remains an unsolved and highly nontrivial problem. 

While being already difficult to tackle in the bipartite case, the question which entangled states are nonlocal gets even more complicated when one enters the realm of multipartite quantum systems which are far more complex to handle.
It was proved in Refs. \cite{POPESCU1992293, Gachechiladze_2017} as a generalization of the results of \cite{GISIN1991201,GISIN199215} that all pure multipartite entangled states exhibit some form of nonlocality. However, an arguably more meaningful question in the multipartite scenario, i.e., whether all genuinely multipartite entangled (GME) states are genuinely multipartite nonlocal (GMNL) \cite{PhysRevLett.115.030404} remains unresolved even for the pure states. While nothing is known in the most general case, for some particular multipartite systems such as pure three-qubit \cite{YuOh} or $N$-qubit symmetric \cite{PhysRevLett.112.140404} states this equivalence was established. At the same time, there exist mixed GME states that are not GMNL \cite{PhysRevLett.115.030404}. In fact, it is even possible to construct GME mixed states that are fully local and thus do not exhibit any form of nonlocality \cite{Geneva}. 

Here we establish the equivalence between genuine multipartite entanglement and nonlocality for a large class of pure and mixed states originating from the stabilizer formalism \cite{Gottesman}. The latter provides a very convenient representation of multipartite states that encompass not only the well-known graph states, but also a broad class of genuinely entangled mixed states. This includes for instance mixed states corresponding to stabilizer quantum error correction codes \cite{Gottesman} such as the five-qubit \cite{PhysRevLett.77.198} or toric ones \cite{Kitaev1997}. In this work, we exploit this representation to show that every genuinely entangled stabilizer subspace of the $N$-qubit Hilbert space is also genuinely nonlocal in the sense that every pure state belonging to it gives rise to genuinely nonlocal correlations. We thus introduce the first examples of multi-qubit Hilbert spaces composed of only genuine multipartite nonlocal pure states; see Refs. \cite{Demianowicz,Demianowicz2022universal} for constructions of genuinely entangled subspaces.
We actually prove a much stronger result that any state (pure or mixed) belonging to a genuinely entangled stabilizer subspace is multipartite fully nonlocal (MFNL) which means that it gives rise to nonlocal correlations that are genuinely nonlocal in the strongest sense, i.e., they have no contribution coming from a local hidden variable model of any type (see Refs. \cite{Gallego1,Gallego2,Cabello20230,Cabello2023} for previous examples of fully nonlocal correlations).
Importantly, this implication generalizes to mixed stabilizer states: any mixed state defined on a GME stabilizer subspace is also multipartite full nonlocal. We thus introduce here a broad class of mixed states that give rise to multipartite fully nonlocal correlations. 
Lastly, we derive a general lower bound on the genuine nonlocality content of any multipartite state or subspace (not necessarily stabilizer), based on the nonlocality contents of bipartite states obtained from local measurements performed by the remaining parties on the multipartite state and subspace respectively.

\textit{Preliminaries.} Let us first provide some background information.

\textit{Stabilizer formalism.} We begin with the stabilizer formalism and define the $N$-qubit Pauli group $\mathbb{P}_N$ to be one containing all $N$-fold tensor product of Pauli matrices $\mathbb{1}, X,Y,Z$ multiplied by $\pm 1$ or $\pm \mathbb{i}$.
Let us then consider a subgroup $\mathbb{S}$ of $\mathbb{P}_{N}$. We call it a stabilizer if it is abelian and satisfies $-\mathbb{1}\notin\mathbb{S}$.
The most interesting property of $\mathbb{S}$ is that, as the name suggests, its elements stabilize a non-empty subspace in the $N$-qubit Hilbert space $\mathcal{H}_N=(\mathbb{C}^2)^{\otimes N}$. Precisely, for any stabilizer $\mathbb{S}$ there exists a subspace $V \subseteq \mathcal{H}_N$ such that $s\ket{\psi}=\ket{\psi}$ for any $\ket{\psi}\in V$ and any $s\in\mathbb{S}$. The largest subspace satisfying this condition is called a stabilizer subspace of $\mathbb{S}$ and is denoted $V_{\mathbb{S}}$. Clearly, this stabilizing property extends to all mixed states acting on a subspace $V_{\mathbb{S}}$, that is, for any $\rho:V_{\mathbb{S}} \to V_{\mathbb{S}}$ one has that $s\rho=\rho s=\rho$ for any $s\in\mathbb{S}$. 
Thus, the most basic function of a stabilizer $\mathbb{S}$, which we extensively use here, is to uniquely define a subspace in terms of a few algebraic relations that are convenient to handle. 

Here we often need to consider matrices forming operators $s\in\mathcal{S}$ that correspond only to subsets of $[N]=\{1,\dots,N\}$. Hence, we denote by $s^{(Q)}$ a $|Q|$-fold tensor product of the Pauli matrices from $s$ that act on qubits belonging to the set $Q\subset [N]$. For instance, for $s=X\otimes Z\otimes \mathbb{1}\otimes XZ$, the sub-operator $s^{(Q)}$ corresponding to $Q=\{1,3\}$ is $s^{(Q)}=X\otimes \mathbb{1}$.

Lastly, since for larger $N$ the cardinality of $\mathbb{S}$ could substantially increase, it is convenient to represent $\mathbb{S}$ in terms of the minimal set of operators, called generators, that enable reproducing the other elements of $\mathbb{S}$. By writing $\mathbb{S}=\langle g_{1},\dots,g_{k}\rangle$ we mean that $\{g_{i}\}_{i=1}^{k}$ is a generating set of $\mathbb{S}$. For instance, the stabilizer $\mathbb{S}=\{\mathbb{1}\otimes\mathbb{1},X\otimes X, Z\otimes Z, -Y\otimes Y\}$ can be represented in this way as $\mathbb{S}=\langle X\otimes X, Z\otimes Z\rangle$.

\textit{Genuine multipartite entanglement.} We can now move on to the definition of genuine entanglement in the multipartite setting. We consider an arbitrary $N$-partite Hilbert space $\mathcal{H}=\mathcal{H}_{1}\otimes\dots\otimes \mathcal{H}_{N}$ and denote by $\mathcal{B}(\mathcal{H})$ the operator space over $\mathcal{H}$. Let us then divide the set $[N]\equiv \{1,\ldots,N\}$ into two disjoint and non-empty sets $Q$ and $\overline{Q}$ and call it a bipartition $Q|\overline{Q}$. A state $\ket{\psi}\in\mathcal{H}$ is called genuinely multipartite entangled if for all bipartitions, $\ket{\psi}\neq \ket{\psi_Q}\otimes|\phi_{\overline{Q}}\rangle$
for any two pure states $\ket{\psi_Q}$ and $|\phi_{\overline{Q}}\rangle$ corresponding to the sets $Q$ and $\overline{Q}$. 

Moving to the mixed-state case, we say that a mixed state $\rho\in\mathcal{B}(\mathcal{H})$ is genuinely multipartite entangled (GME) \cite{PhysRevA.65.012107} if it does not admit a biseparable model, meaning that it cannot be written 
as a convex combination,
\begin{equation}
\rho = \sum_{Q\subset[N]}p_Q \sum_{i}q_{i;Q}\,\rho_Q^{i}\otimes \rho_{\overline{Q}}^i,
\end{equation}
of states which are separable across various bipartitions $Q|\overline{Q}$.
In an analogous manner, one can also define genuine entanglement for subspaces: we say that a subspace $V\subset\mathcal{H}$ is GME if all pure states belonging to it are GME \cite{Demianowicz}. Let us stress here that for a genuinely entangled subspace $V$, every mixed state defined on it is genuinely entangled too.

Our focus in this work is the stabilizer subspaces, and therefore we need a criterion that allows us to easily decide whether a given stabilizer subspace $V_{\mathbb{S}}$ is GME. Such a criterion was recently introduced in \cite{Makuta_2021} (see Theorem 1 therein), and for further purposes we recall it here as the following lemma.
\begin{lem}\label{lem:gme}
Given a stabilizer $\mathbb{S}=\langle g_{1},\dots, g_{k}\rangle$, the corresponding stabilizer subspace $V_{\mathbb{S}}$ is GME iff for each bipartition $Q|\overline{Q}$ there exists a pair $i,j\in [k]$ for which:
\begin{equation}\label{eq:lem_ncom_org}
\left[g_{i}^{(Q)},g_{j}^{(Q)}\right]\neq 0.
\end{equation}
\end{lem}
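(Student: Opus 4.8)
\emph{Proof plan.} The plan is to reduce the global statement to a single-bipartition criterion. Since $V_{\mathbb{S}}$ is GME exactly when it contains no pure product state across \emph{any} bipartition, and the negation of the stated condition is the existence of a bipartition $Q|\overline{Q}$ for which \emph{all} restricted generators $g_i^{(Q)}$ mutually commute, it suffices to show, for each fixed bipartition, that $V_{\mathbb{S}}$ contains a product state $\ket{\psi_Q}\otimes\ket{\phi_{\overline{Q}}}$ across $Q|\overline{Q}$ if and only if $[g_i^{(Q)},g_j^{(Q)}]=0$ for all $i,j$. Throughout I would write each generator in canonical form $g_i=\phi_i\, g_i^{(Q)}\otimes g_i^{(\overline{Q})}$ with $\phi_i=\pm1$ (the phase $\pm\mathbb{i}$ being excluded because it would force $-\mathbb{1}\in\mathbb{S}$), and record that, since $g_i^{(Q)}$ and $g_i^{(\overline{Q})}$ are Pauli strings, $[g_i,g_j]=0$ makes $g_i^{(Q)},g_j^{(Q)}$ commute precisely when $g_i^{(\overline{Q})},g_j^{(\overline{Q})}$ do; hence the commuting hypothesis on $Q$ is equivalent to the same hypothesis on $\overline{Q}$.

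First I would dispatch the easy implication. If $\ket{\psi_Q}\otimes\ket{\phi_{\overline{Q}}}\in V_{\mathbb{S}}$, then applying each $g_i=\phi_i\,g_i^{(Q)}\otimes g_i^{(\overline{Q})}$ and using that a product operator can fix a product state only by acting as a scalar on each tensor factor, one finds that $\ket{\psi_Q}$ is a common eigenvector of all $g_i^{(Q)}$. But two \emph{anti}commuting Pauli strings admit no common eigenvector: such an eigenvector would force $\alpha_i\alpha_j=-\alpha_i\alpha_j$ on the $\pm1$ eigenvalues, which is impossible. Hence all $g_i^{(Q)}$ commute.

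The substantial direction is the converse. Assuming all $g_i^{(Q)}$---and therefore all $g_i^{(\overline{Q})}$---commute, each family shares a common product eigenbasis, so the goal becomes selecting common eigenvalues $\alpha_i$ of $g_i^{(Q)}$ and $\beta_i$ of $g_i^{(\overline{Q})}$ obeying the stabilizing conditions $\phi_i\alpha_i\beta_i=1$; choosing $\ket{\psi_Q},\ket{\phi_{\overline{Q}}}$ in the corresponding eigenspaces then places the product state in $V_{\mathbb{S}}$. Writing $\alpha_i=(-1)^{x_i}$, the admissible sign patterns are exactly those consistent with the $\FF_2$-linear relation spaces $R_Q=\{v:\prod_i(g_i^{(Q)})^{v_i}\propto\mathbb{1}\}$ and $R_{\overline{Q}}$ of the two families. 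Substituting $\beta_i=\phi_i\alpha_i$ collapses all the requirements into a single affine system over $\FF_2$ in the unknowns $x_i$, built from the constraints coming from $R_Q$ and from $R_{\overline{Q}}$.

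I expect the one real obstacle to be proving this affine system is \emph{consistent}. The argument I would use is that any inconsistency must come from some $u\in R_Q\cap R_{\overline{Q}}$ receiving two incompatible right-hand sides, which reduces to asking whether the global sign of $\prod_i g_i^{u_i}$ is $+1$. For such $u$ one has $\prod_i(g_i^{(Q)})^{u_i}\propto\mathbb{1}$ and $\prod_i(g_i^{(\overline{Q})})^{u_i}\propto\mathbb{1}$ simultaneously, so $\prod_i g_i^{u_i}=\pm\mathbb{1}$ lies in $\mathbb{S}$; the defining property $-\mathbb{1}\notin\mathbb{S}$ then forces this product to be exactly $+\mathbb{1}$, which is precisely the consistency relation required. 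Thus it is the stabilizer axiom $-\mathbb{1}\notin\mathbb{S}$ that guarantees solvability: the system has a solution $x$, the associated product state lies in $V_{\mathbb{S}}$, and the subspace is not GME. Collecting the two implications over all bipartitions yields the lemma.
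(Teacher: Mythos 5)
Your argument is correct, but note first that the paper itself contains no proof of this statement: Lemma~\ref{lem:gme} is imported verbatim from Theorem~1 of Ref.~\cite{Makuta_2021}, so there is no in-paper derivation to compare against. Judged on its own, your proof follows the natural (and essentially the original) route: reduce non-GME of the subspace to the existence of a product vector across some bipartition, observe that a stabilized product vector forces all $g_i^{(Q)}$ to admit a common eigenvector (impossible for anticommuting Pauli strings), and conversely build a product vector from joint eigenvectors of the two commuting families by solving an affine system over $\FF_2$ whose consistency is guaranteed by $-\mathbb{1}\notin\mathbb{S}$. The identification of $-\mathbb{1}\notin\mathbb{S}$ as the source of solvability is exactly the right key idea. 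Two points deserve to be made explicit in a full write-up. First, you use without comment the fact that for pairwise commuting Hermitian Pauli strings \emph{every} sign pattern consistent with the relation space $R_Q$ is realized by a nonzero joint eigenvector (not merely that realized patterns are consistent); this is standard stabilizer theory but it is the load-bearing half of the claim ``the admissible sign patterns are exactly those consistent with $R_Q$,'' and should be justified, e.g., by noting that the corresponding joint eigenprojector has rank $2^{|Q|-\mathrm{rank}}>0$. Second, to conclude that the only possible inconsistencies of the combined affine system come from elements of $R_Q\cap R_{\overline{Q}}$, you need the right-hand sides to be additive along each relation space, i.e., $c^{Q}_{v+v'}=c^{Q}_{v}c^{Q}_{v'}$ for $v,v'\in R_Q$; this holds because the restricted operators commute and square to $\mathbb{1}$ once the generators are put in the canonical form $g_i=\phi_i\,g_i^{(Q)}\otimes g_i^{(\overline{Q})}$ with $\phi_i=\pm1$, but it is worth one line. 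With those two remarks added, the proof is complete.
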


\textit{Genuine multipartite nonlocality.}
Let us now consider a typical Bell scenario with $N$ parties sharing a state $\rho\in\mathcal{B}(\mathcal{H})$ and performing measurements on their shares of this state. Party $i$ can freely choose to perform one of $x_i=1,\ldots,m$ measurements which yields an outcome $a_i=0,\ldots,d-1$. 
By repeating these measurements many times, the parties create correlations that are described by a collection of joint probabilities $\mathcal{P}=\{P(\mathbf{a}|\mathbf{x})\}$, often referred to as behavior, where $P(\mathbf{a}|\mathbf{x})$ is a probability of obtaining outcomes $a_1,\ldots,a_N=:\mathbf{a}$ after performing measurements $x_1,\ldots,x_N=:\mathbf{x}$. 

In order to formulate the notion of genuine multipartite nonlocality we need the concept of non-signaling behaviors. Let us for a moment abstract from quantum behaviors and introduce a broader class of behaviors for which a measurement choice made by parties belonging to a subset $Q$ does not have an influence over the measurement result of the remaining parties. Formally, this is represented by a set of linear constraints,
\begin{equation}\label{eq:no-signaling}
P(\mathbf{a}_{\overline{Q}}|\mathbf{x}_{\overline{Q}})=\sum_{i\in Q}\sum_{a_{i}} P(\mathbf{a}|\mathbf{x})
\end{equation}
for all $Q\subset [N]$, all $\mathbf{a}_{Q}$, and all $\mathbf{x}$, where $\mathbf{a}_{Q}=\{a_{i}\}_{i\in Q}$, $\mathbf{x}_{Q}=\{x_{i}\}_{i\in Q}$, and the sum $\sum_{a_{i}}$ is over all possible measurement results. We call behaviors satisfying the above constraints non-signaling. 
Note that all behaviors originating from quantum theory are non-signaling;
yet there exist non-signaling correlations that are not quantum.

Let us now consider a behavior $\mathcal{P}$. We call it 
local with respect to a given bipartition $Q|\overline{Q}$ if 
\begin{equation}\label{eq:locality}
P(\mathbf{a}|\mathbf{x}) = \sum_{\lambda} q_{\lambda} P(\mathbf{a}_{Q}|\mathbf{x}_{Q},\lambda)P(\mathbf{a}_{\overline{Q}}|\mathbf{x}_{\overline{Q}},\lambda)
\end{equation}
for all $P(\mathbf{a}|\mathbf{x})\in\mathcal{P}$,
where $\lambda$ is the hidden variable with a distribution $q_{\lambda}$, and 
$P(\mathbf{a}_{Q}|\mathbf{x}_{Q},\lambda)$ and 
$P(\mathbf{a}_{\overline{Q}}|\mathbf{x}_{\overline{Q}},\lambda)$ are some
non-signaling (in general non-quantum) behaviors corresponding to the disjoint sets $Q$ and $\overline{Q}$ that satisfy the non-signaling conditions. Similarly to the case of entanglement, we call $\mathcal{P}$ genuinely multipartite non-local (GMNL) \cite{PhysRevD.35.3066, PhysRevA.88.014102} if it cannot be written as a convex combination of behaviors that are local with respect to various bipartitions.

While the above notion allows us to describe nonlocality in a quantitative way, it does not tell us much about how strong this nonlocality is. A possible approach to quantify GMNL in $\mathcal{P}$, put forward in Ref. \cite{Almeida_2010} as a multipartite generalization of the Elitzur-Popescu-Rohrlich (EPR-2) decomposition \cite{EPR2}, is through the following convex decomposition 
\begin{equation}\label{eq:decomposition}
P(\mathbf{a}|\mathbf{x})=\sum_{Q\in[N]}p_{Q|\overline{Q}} P_{Q|\overline{Q}}(\mathbf{a}|\mathbf{x})+p_{NL}P_{NL}(\mathbf{a}|\mathbf{x}),
\end{equation}
where $P_{Q|\overline{Q}}(\mathbf{a}|\mathbf{x})$ is a behavior that admits the decomposition (\ref{eq:locality}) for a given bipartition $Q|\overline{Q}$, whereas $P_{NL}(\mathbf{a}|\mathbf{x})$ is one that is nonlocal across any bipartition, and, finally, $p_{Q|\overline{Q}}$ for all $Q$ and $p_{NL}$ form a probability distribution; in particular $\sum_{Q\in[N]}p_{Q|\overline{Q}}+p_{NL}=1$.

Now, the minimal $p_{NL}$ for which (\ref{eq:decomposition}) holds true, denoted $\tilde{p}_{NL}$, is called \textit{genuine entanglement content} of $\mathcal{P}$. A behavior $\mathcal{P}$ for which $\tilde{p}_{NL}>0$ is 
genuinely entangled, and in the extreme case of $\tilde{p}_{NL}=1$ 
we call it multipartite fully non-local \cite{Almeida_2010}. Thus, correlations that are MFNL
are also GMNL, meaning that the former is a stronger form of 
multipartite nonlocality than the latter. 

Notably, both these notions can also be applied to quantum states: we say that a state $\rho$ is MFNL or GMNL if there exists a Bell scenario and a set of measurements such that the resulting behavior $\mathcal{P}$ is MFNL or GMNL, respectively.

Lastly, in order to detect MFNL (and thus also GMNL) of a given $\rho$, we will make use of Theorems 1 and 2 from \cite{Almeida_2010} which we below combine into a single lemma.
\begin{lem}\label{lem:MFNL}
A state $\rho$ is said to be MFNL if for every bipartition $Q|\overline{Q}$ it is possible to create a maximally entangled state
\begin{equation}\label{eq:max}
\ket{\phi_{+}}_{i,j}=\frac{1}{\sqrt{d}}\sum_{l=0}^{q-1}\ket{ll}_{i,j}
\end{equation}
for some positive integer $q\geqslant 2$, between $i\in Q$ and $j\in \overline{Q}$ for all possible outcomes of local measurements on the parties from $[N]\setminus\{i,j\}$.
\end{lem}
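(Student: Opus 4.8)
The plan is to reduce the statement to the bipartite fact that a two-qudit maximally entangled state is \emph{fully nonlocal}, i.e.\ that it admits a Bell behavior whose (bipartite) Elitzur--Popescu--Rohrlich local weight vanishes; this is the building block I would import from the cited results. First I note that it suffices to show that the behavior generated by $\rho$ has vanishing local weight across \emph{every} bipartition once the parties of $Q$ and of $\overline{Q}$ are each grouped into a single party. Indeed, in the decomposition (\ref{eq:decomposition}) the term $p_{Q|\overline{Q}}P_{Q|\overline{Q}}$ is by construction local across the cut $Q|\overline{Q}$, so the bipartite local weight across that cut is at least $p_{Q|\overline{Q}}$; hence, if this weight vanishes for all bipartitions, every $p_{Q|\overline{Q}}$ must be zero and $\tilde{p}_{NL}=1$, which is exactly MFNL.

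Fixing a bipartition $Q|\overline{Q}$ and the pair $i\in Q$, $j\in\overline{Q}$ supplied by the hypothesis, I would let the $N-2$ parties in $[N]\setminus\{i,j\}$ perform the local measurement that, for \emph{every} outcome, steers the reduced state of $i,j$ into the maximally entangled state $\ket{\phi_{+}}_{i,j}$ of (\ref{eq:max}), while $i$ and $j$ use the settings that render a two-qudit maximally entangled state fully nonlocal across $i|j$. Collecting, over all bipartitions, the settings required by these protocols into one common input list for each party produces a single behavior $\mathcal{P}$. Since locality across a cut is preserved under restricting the available inputs, it is then enough to prove that the restriction of $\mathcal{P}$ to the settings of the $Q|\overline{Q}$-protocol is fully nonlocal across $Q|\overline{Q}$.

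The core step is a conditioning argument. Suppose, for contradiction, that across $Q|\overline{Q}$ the behavior admits a local part of positive weight, i.e.\ $P=p_LP_L+(1-p_L)P_{NL}$ with $p_L>0$ and $P_L$ of the form (\ref{eq:locality}). The parties of $Q\setminus\{i\}$ lie on the $Q$-side of the cut and those of $\overline{Q}\setminus\{j\}$ on the $\overline{Q}$-side, so conditioning on their outcomes is a local post-processing on each side and keeps $P_L$ local across $i|j$. Because the marginals of $P_L$ over the side-parties' outcomes sum to one, at least one outcome $\mathbf{a}^{*}$ carries positive $P_L$-weight; conditioning on it yields, on parties $i,j$, a behavior with a strictly positive local component across $i|j$. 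But by construction the post-measurement state is $\ket{\phi_{+}}_{i,j}$ and the induced behavior is fully nonlocal, a contradiction. Hence $p_L=0$ for every bipartition, so $\tilde{p}_{NL}=1$.

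I expect the conditioning step to be the main obstacle: one must check carefully that locality across $Q|\overline{Q}$ is genuinely inherited by the post-selected bipartite behavior on $i,j$ (so that the shared hidden variable $\lambda$ in (\ref{eq:locality}) still factorizes after conditioning on the side-parties' outcomes), and that some steering outcome carries positive weight in the local part $P_L$ itself, not merely in the full quantum behavior. By contrast, the existence of fully nonlocal measurements for $\ket{\phi_{+}}_{i,j}$ and the bound $p_{Q|\overline{Q}}\leq$ (bipartite local weight across $Q|\overline{Q}$) are comparatively routine.
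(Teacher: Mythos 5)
Your argument is essentially correct, but note that the paper does not prove Lemma \ref{lem:MFNL} at all: it is imported as a combination of Theorems 1 and 2 of Ref. \cite{Almeida_2010}, and your proposal is in substance a faithful reconstruction of the argument given there (steer every cut to $\ket{\phi_{+}}_{i,j}$, then kill any hypothetical local component across $Q|\overline{Q}$ by conditioning on the steering outcomes and invoking full nonlocality of the bipartite maximally entangled state). The two points you flag as delicate are exactly the ones that make the conditioning step sound: the conditioned local part stays local across $i|j$ because the components $P(\mathbf{a}_{Q}|\mathbf{x}_{Q},\lambda)$ in Eq.~\eqref{eq:locality} are required to be non-signaling, so the side-parties' marginals are independent of $x_{i},x_{j}$ and the reweighted distribution over $\lambda$ is well defined; and normalization of $P_{L}$ guarantees a steering outcome of strictly positive $P_{L}$-weight. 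The only caveat worth making explicit is that full nonlocality of $\ket{\phi_{+}}_{i,j}$ is attained only in the limit of infinitely many chained-Bell settings \cite{chained,Barrett_2006}, so the conclusion $\tilde{p}_{NL}=1$ is likewise a limiting statement, exactly as in Ref.~\cite{Almeida_2010}.
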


\textit{Main result.} Here we present our main result that all genuinely entangled stabilizer subspaces are multipartite fully nonlocal. We thus establish, in particular, the equivalence between GME and GMNL for 
a large class of stabilizer states.

Our strategy to prove this statement is quite simple: we aim to show that for any genuinely entangled stabilizer subspace, it follows from Lemma \ref{lem:gme} that the conditions of Lemma \ref{lem:MFNL}
are satisfied, that is, that one can create a maximally entangled state across any nontrivial bipartition $Q|\bar{Q}$ by local measurements on $N-2$ parties. 
To achieve this we consider an interesting property of genuinely entangled stabilizer subspaces.
\begin{lem}\label{lem:two_qubits}
Let $\mathbb{S}$ be a stabilizer. The corresponding subspace $V_{\mathbb{S}}$ is GME iff for every pair of qubits $\alpha_{1},\alpha_{2}\in [N]$ there exists a pair of stabilizing operators $s_{i},s_{j}\in\mathbb{S}$ such that
\begin{equation}\label{eq:lem_ncom}
\left[s_{i}^{(\alpha_{l})},s_{j}^{(\alpha_{l})}\right]\neq0, \quad \left[s_{i}^{(\alpha)},s_{j}^{(\alpha)}\right]=0
\end{equation}
for all $\alpha \in[N]\setminus\{\alpha_{1},\alpha_{2}\}$ and all $l\in \{1,2\}$.
\end{lem}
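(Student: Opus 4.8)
The plan is to pass to the binary symplectic description of $\mathbb{S}$ and reduce the claim to a linear-algebra problem over $\FF_2$. I identify each $s\in\mathbb{S}$ with its phase-space vector in $\FF_2^{2N}=\bigoplus_\alpha\FF_2^2$ and write $S$ for the resulting binary subspace, which is isotropic since $\mathbb{S}$ is abelian; commutation does not depend on phases, so working in $S$ and lifting back to $\mathbb{S}$ at the end is harmless. For each qubit $\alpha$ let $\omega_\alpha(v,w)\in\FF_2$ equal $0$ or $1$ according to whether $v,w$ commute or anticommute on qubit $\alpha$; this is an alternating form on $S$ of symplectic rank at most $2$, isotropy of $S$ reads $\sum_\alpha\omega_\alpha=0$, and the restriction to a set $Q$ is $\omega_Q=\sum_{\alpha\in Q}\omega_\alpha$. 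Then $[s_i^{(\alpha)},s_j^{(\alpha)}]\neq0$ is just $\omega_\alpha(s_i,s_j)=1$, and by bilinearity $\omega_Q$ is nonzero on $S$ iff it is nonzero on some pair of generators; hence Lemma~\ref{lem:gme} is equivalent to the statement I will use throughout: $V_{\mathbb{S}}$ is GME iff $\omega_Q\neq0$ on $S$ for every nonempty proper $Q$, i.e.\ iff the only linear relation among the forms $\{\omega_\alpha\}$ is $\sum_\alpha\omega_\alpha=0$. Finally, condition (\ref{eq:lem_ncom}) is exactly the requirement that the anticommutation pattern $\mathbf{c}(s_i,s_j):=(\omega_\alpha(s_i,s_j))_\alpha\in\FF_2^N$ equals the weight-two vector supported on $\{\alpha_1,\alpha_2\}$.

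The ($\Leftarrow$) direction is then immediate: assuming (\ref{eq:lem_ncom}) for every pair, I take any bipartition $Q|\overline{Q}$, choose $\alpha_1\in Q$ and $\alpha_2\in\overline{Q}$, and invoke the guaranteed operators; they anticommute on exactly one qubit of $Q$, namely $\alpha_1$, so $\omega_Q(s_i,s_j)=1$ and $V_{\mathbb{S}}$ is GME by the reformulated Lemma~\ref{lem:gme}.

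For the harder ($\Rightarrow$) direction I assume GME and fix a pair $\{\alpha_1,\alpha_2\}$; I must realise the pattern $e_{\alpha_1}+e_{\alpha_2}$ as some $\mathbf{c}(v,w)$. Isotropy forces every pattern into the even-weight subspace $E\subset\FF_2^N$, and the reformulation above says precisely that the realisable patterns \emph{span} $E$; since the weight-two vectors span $E$, my target already lies in this span, and the whole difficulty is to upgrade ``lies in the span of the realisable patterns'' to ``is itself realisable''. To attack this I fix one operator: for fixed $w$ the map $v\mapsto\mathbf{c}(v,w)$ is $\FF_2$-linear, its image is the orthogonal complement of the relation space $R_w=\{u\in\FF_2^N:\sum_\alpha u_\alpha\,\omega_\alpha(w,\cdot)=0\}$, and the target is attained by some $v$ iff every $u\in R_w$ satisfies $u_{\alpha_1}=u_{\alpha_2}$. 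Hence it suffices to produce a single $w\in S$ such that no linear relation among the functionals $\{\omega_\alpha(w,\cdot)\}$ separates $\alpha_1$ from $\alpha_2$, after which I solve the resulting linear system for $v$.

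The construction of this $w$ is the main obstacle. The functionals $\omega_\alpha(w,\cdot)$ may a priori obey relations beyond the inherited $\sum_\alpha\omega_\alpha(w,\cdot)=0$, and over $\FF_2$ a naive counting or covering bound is hopeless, since $R_w$ must avoid exponentially many ``bad'' relations at once. I expect to control this using the two rigid features that GME supplies: first, that every qubit is active, $\dim\pi_\alpha(S)=2$ for all $\alpha$ (forced because the cut $\{\alpha\}|\overline{\{\alpha\}}$ already requires $\omega_\alpha\neq0$), so that $\omega_\alpha(w,\cdot)$ can be kept nonzero where needed; and second, that each $\omega_\alpha$ has symplectic rank exactly $2$, i.e.\ is decomposable as $\omega_\alpha=x_\alpha\wedge z_\alpha$ in the single-qubit coordinates, which sharply restricts the admissible relations. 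Concretely I plan to realise $w$ by an induction that deletes the qubits outside $\{\alpha_1,\alpha_2\}$ one at a time, passing at each step to a GME stabilizer on one fewer qubit in which the pair remains active, solving the two-qubit problem there, and lifting the pair $(v,w)$ back while keeping the two operators commuting on the reinstated qubit. Securing this lifting step, namely showing the deletion can always be arranged so that commutation on the removed qubit is restorable, is the crux on which the whole argument turns.
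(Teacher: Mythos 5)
Your symplectic setup is sound and is really the same formalism as the paper's (the paper phrases it via commutation matrices $C^{\alpha}$ with $(C^{\alpha})_{i,j}=\omega_{\alpha}(g_i,g_j)$ and the identity $\sum_{\alpha}C^{\alpha}=0$, which is your isotropy relation). The ($\Leftarrow$) direction is correct and matches the paper, and your reduction of ($\Rightarrow$) to the existence of a single $w\in S$ such that no relation among the functionals $\{\omega_{\alpha}(w,\cdot)\}_{\alpha}$ separates $\alpha_1$ from $\alpha_2$ is exactly the reduction the paper performs (there: find $v$ with $C^{\alpha_1}v\notin\mathrm{span}\{C^{\alpha}v\}_{\alpha\notin\{\alpha_1,\alpha_2\}}$, then pick $u$ dual to $C^{\alpha_1}v$ in a basis of $\mathrm{span}\{C^{\alpha}v\}$ to solve the linear system).

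However, the proposal stops at precisely the step that carries all the content: you never construct the good $w$, and you say yourself that the lifting step of your qubit-deletion induction is unresolved. That induction is also the wrong tool: restricting a stabilizer to $N-1$ qubits does not in general yield a stabilizer (the restricted operators need not commute, and $-\mathbb{1}$ may appear), and genuine entanglement of the reduced object is not inherited from GME of $V_{\mathbb{S}}$, so the inductive hypothesis is not available in the form you need. The paper avoids any induction and any covering of ``exponentially many bad relations'': it argues by contradiction that if for \emph{every} $w$ there were a subset $I\subset[N]\setminus\{\alpha_1,\alpha_2\}$ with $\omega_{\alpha_1}(w,\cdot)=\sum_{\alpha\in I}\omega_{\alpha}(w,\cdot)$, then a single fixed identity $C^{\alpha_1}=\sum_{\alpha\in I}C^{\alpha}$ would hold, i.e.\ $\sum_{\alpha\in Q}\omega_{\alpha}=0$ on $S$ for a set $Q$ containing $\alpha_1$ but not $\alpha_2$ — and this contradicts the GME criterion of Lemma \ref{lem:gme}. (Even that step deserves care over $\FF_2$, since the sets $\{w:\,C^{\alpha_1}w=\sum_{\alpha\in I}C^{\alpha}w\}$ are subspaces whose union could a priori cover $\FF_2^{k}$ without any single one being everything; but the point is that the argument is a direct global one, not a qubit-by-qubit reduction.) As it stands, your proposal correctly frames the problem but does not prove the forward implication.
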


A proof of this lemma can be found in Appendix \ref{app:two_qubits} It is easy to notice that Lemmas \ref{lem:gme} and \ref{lem:two_qubits} are quite similar; in fact, the implication "$\Leftarrow$" in Lemma \ref{lem:two_qubits} follows directly from Lemma \ref{lem:gme}. One important difference, however, is that Lemma \ref{lem:two_qubits} involves all operators from $\mathbb{S}$, whereas Lemma \ref{lem:gme} is formulated only in terms of generators of $\mathbb{S}$.
Still, the main advantage of Lemma \ref{lem:two_qubits} over Lemma 
\ref{lem:gme} is that it tells us that for any pair of qubits
one can always find two operators in $\mathbb{S}$ in which
the local Pauli matrices anticommute exactly for those qubits, whereas 
they commute for the remaining ones.

Utilizing Lemma \ref{lem:two_qubits} we can now formulate our main result.
\begin{thm}\label{thm:mfnl}
A stabilizer subspace $V_{\mathbb{S}}$ is multipartite fully nonlocal iff it is genuinely multipartite entangled.
\end{thm}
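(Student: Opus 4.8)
The statement is an equivalence, so I would prove the two implications separately, treating the subspace as MFNL (resp.\ GME) precisely when every state supported on it is. The direction $\mathrm{MFNL}\Rightarrow\mathrm{GME}$ is the easy one, since nonlocality of any kind requires genuine entanglement. Concretely, I would argue the contrapositive: if $V_{\mathbb{S}}$ is not GME, then by definition it contains a pure state $\ket{\psi}=\ket{\psi_Q}\otimes\ket{\phi_{\overline{Q}}}$ that factorizes across some bipartition $Q|\overline{Q}$. Any local measurements on such a state produce a behavior that factorizes as $P(\mathbf{a}|\mathbf{x})=P(\mathbf{a}_Q|\mathbf{x}_Q)P(\mathbf{a}_{\overline{Q}}|\mathbf{x}_{\overline{Q}})$, which is of the local form \eqref{eq:locality} for that cut. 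Hence $\ket{\psi}$ admits the decomposition \eqref{eq:decomposition} with $p_{Q|\overline{Q}}=1$ and $\tilde{p}_{NL}=0$, so it is not MFNL, and neither is $V_{\mathbb{S}}$.

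For the substantive direction $\mathrm{GME}\Rightarrow\mathrm{MFNL}$, the plan is to verify the hypothesis of Lemma~\ref{lem:MFNL} for an arbitrary nontrivial bipartition $Q|\overline{Q}$. I would pick any $\alpha_1\in Q$ and $\alpha_2\in\overline{Q}$, which is possible since the cut is nontrivial, and invoke Lemma~\ref{lem:two_qubits} to obtain a pair $s_i,s_j\in\mathbb{S}$ whose single-qubit Pauli components anticommute on $\alpha_1$ and on $\alpha_2$ but commute on every qubit of $R:=[N]\setminus\{\alpha_1,\alpha_2\}$. On each $\alpha\in R$ the commuting Paulis $s_i^{(\alpha)}$ and $s_j^{(\alpha)}$ share an eigenbasis, so I instruct party $\alpha$ to measure in this common eigenbasis (any refinement suffices on qubits where a component is $\mathbb{1}$). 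The key point is that this measurement prescription depends only on $s_i,s_j$ and not on the state being measured.

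Next I would analyze the post-measurement two-qubit state. Projecting each $\alpha\in R$ onto its outcome $\ket{m_\alpha}$ gives a common eigenvector $\ket{m_R}=\bigotimes_{\alpha\in R}\ket{m_\alpha}$ of $s_i^{(R)}$ and $s_j^{(R)}$ with eigenvalues $\nu_i,\nu_j\in\{\pm1\}$. Using $s_i\ket{\psi}=\ket{\psi}$ for $\ket{\psi}\in V_{\mathbb{S}}$, a short computation shows the surviving two-qubit state $\ket{\psi'}_{\alpha_1\alpha_2}$ satisfies $\nu_i\,s_i^{(\alpha_1)}\otimes s_i^{(\alpha_2)}\ket{\psi'}=\ket{\psi'}$ and similarly $\nu_j\,s_j^{(\alpha_1)}\otimes s_j^{(\alpha_2)}\ket{\psi'}=\ket{\psi'}$. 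Writing $A=s_i^{(\alpha_1)}\otimes s_i^{(\alpha_2)}$ and $B=s_j^{(\alpha_1)}\otimes s_j^{(\alpha_2)}$, both operators have non-identity components on each target qubit, differ in their $\alpha_1$-component (hence are independent), and commute because they anticommute on exactly the two sites $\alpha_1,\alpha_2$.

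Finally, the crux is that the common $(\nu_i,\nu_j)$-eigenstate of $A,B$ is maximally entangled. Since on each target qubit the two distinct non-identity Paulis anticommute, a single-qubit Clifford maps them to $X$ and $Z$; applying such Cliffords on $\alpha_1$ and $\alpha_2$ turns $\{A,B\}$ into $\{\pm X\otimes X,\pm Z\otimes Z\}$, whose joint eigenspace is one-dimensional and spanned by a Bell state. Local unitaries preserve entanglement, so $\ket{\psi'}_{\alpha_1\alpha_2}$ is maximally entangled with $q=2$, for \emph{every} measurement outcome. The same relations $s_i\rho=\rho s_i=\rho$ hold for any pure or mixed $\rho$ on $V_{\mathbb{S}}$, forcing the reduced post-measurement operator to be supported on that one-dimensional eigenspace and hence to coincide with this maximally entangled pure state uniformly. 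This verifies Lemma~\ref{lem:MFNL} for every bipartition. I expect the main obstacle to be exactly this last step: controlling the post-measurement two-qubit state simultaneously over all outcomes and over all states of the subspace, which is precisely what the exact anticommutation pattern supplied by Lemma~\ref{lem:two_qubits} makes possible.
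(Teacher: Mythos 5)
Your proposal is correct and follows essentially the same route as the paper: the easy direction via MFNL $\Rightarrow$ GMNL $\Rightarrow$ GME, and the substantive direction by invoking Lemma~\ref{lem:two_qubits} for a pair $\alpha_1\in Q$, $\alpha_2\in\overline{Q}$, measuring the remaining parties in the common eigenbasis of the commuting local Paulis, and showing the post-measurement state is uniquely fixed (for every outcome and every $\rho$ on $V_{\mathbb{S}}$) by two independent commuting two-qubit Pauli stabilizers that are locally Clifford-equivalent to $X\otimes X$ and $Z\otimes Z$. The only cosmetic difference is that you spell out the trivial direction via an explicit product-state contrapositive, which the paper compresses into one line.
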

Since the proof is quite technical, we have deferred it to Appendix \ref{app:gme_gmnl}. Instead, here we provide a simple example 
illustrating its main idea. We consider the two-dimensional stabilizer subspace, denoted $V_5$, corresponding to the five-qubit code \cite{Bennett_1996, PhysRevLett.77.198}, i.e., one stabilized by $\mathbb{S}_{5}=\langle g_{1},g_{2},g_{3},g_{4}\rangle$, where
\begin{align}
\begin{split}
g_{1}&= X_{1} Z_{2} Z_{3} X_{4},\quad g_{2}=X_{2} Z_{3} Z_{4} X_{5},\\
g_{3}&= X_{1} X_{3} Z_{4} Z_{5},\quad g_{4}= Z_{1} X_{2} X_{4} Z_{5}
\end{split}
\end{align}
with $X_{i},Z_{i}$ denoting the Pauli matrices that act on the party $i$.

Let us then consider a bipartition $\{1,2\}|\{3,4,5\}$. The conditions of Lemma \ref{lem:two_qubits} for this bipartition are fulfilled with $\alpha_{1}=1$, $\alpha_{2}=4$, and $i=3$ and $j=4$, that is, the Pauli matrices at sites $1$ and $4$ of the stabilizing operators $g_3$ and $g_4$ anticommute, whereas 
those appearing at the remaining sites commute. This fact can be used to construct a maximally entangled state between parties $1$ and $4$ from any mixed state $\rho$ defined on $V_5$ by performing suitable measurements on sites $2$, $3$, and $5$. These measurements 
can be constructed from the joint eigenbases of the commuting Pauli operators
appearing at those sites. For instance, for party $2$ we have $g_{3}^{(2)}=\mathbb{1}$ and $g_{4}^{(2)}=X$, and so we pick the measurement for this party in the common eigenbasis of these operators, i.e., $\{\ket{+},\ket{-}\}$, where $\ket{\pm}=(1/\sqrt{2})(\ket{0}\pm\ket{1})$. Analogously, for parties $3$ 
and $5$ we take the measurement bases $\{\ket{+},\ket{-}\}$ and $\{\ket{0},\ket{1}\}$, respectively.

Let us then consider a mixed state $\rho$ defined on $V_5$ and assume that after performing the above measurements on it, the parties observe outcomes corresponding to $\ket{+}$ state for qubits $2$ and $3$, and $\ket{0}$ for qubit $5$. Then, the post-measurement state corresponding to parties $1$ and $4$ reads
\begin{equation}
\sigma_{1,4}=\frac{1}{n}\operatorname{Tr}_{2,3,5}\left[\ket{++0}_{2,3,5}\!\bra{++0}\rho \right],
\end{equation}
where $n$ is the normalization constant. Using the stabilizing relation $\rho=g_{3}\rho$, we further obtain
\begin{eqnarray*}\label{eq:measurements}
\sigma_{1,4}&\!\!=\!\!&\frac{1}{n}\operatorname{Tr}_{2,3,5}[\ket{++0}_{2,3,5}\!\bra{++0} X_{1}X_{3}Z_{4}Z_{5}\rho]\nonumber\\
&\!\!=\!\!&X_{1}Z_{4}\frac{1}{n}\operatorname{Tr}_{2,3,5}[\ket{++0}_{2,3,5}\!\bra{++0} \rho]=X_{1}Z_{4}\sigma_{1,4}.\nonumber
\end{eqnarray*}
In a similar way, one can use the other stabilizing relation $\rho=g_{4}\rho$ to show that $\sigma_{1,4}=Z_1X_4\sigma_{1,4}$. These two relations directly imply that, up to a local unitary, $\sigma_{1,4} = Z_{1}X_{4}\sigma_{1,4}$ is the maximally entangled state of two qubits $\ket{\phi_+}=(\ket{00}+\ket{11})/\sqrt{2}$; recall that the latter is stabilized by 
$X_{1}X_{4}$, $Z_{1}Z_{4}$.

One can verify that for any other choice of the outcomes obtained by parties $2$, $3$, and $5$, the post-measurement state $\sigma_{1,4}$ is stabilized by the same (up to the sign) operators, and thus, it is also the two-qubit maximally entangled state. Furthermore, the same algorithm can be repeated for all bipartitions, and therefore the conditions of Lemma \ref{lem:MFNL} are satisfied in this case. Consequently, the stabilizer subspace corresponding to the five-qubit code is multipartite fully nonlocal.

This theorem thus establishes an equivalence between genuine entanglement and nonlocality in the stabilizer formalism, generalizing results of Ref. \cite{Almeida_2010} derived for qubit graph states. Moreover, it also provides a convenient tool for constructing broad classes of mixed MFNL states for $N\geqslant 4$, as in fact any mixed state supported on a GME stabilizer subspace is MFNL. The above constraint follows from the fact that GME stabilizer subspaces of the smallest nontrivial dimension $\dim(V_{\mathbb{S}})\geqslant 2$ exist in systems consisting of at least four qubits [cf. Theorem 3 in Ref. \cite{Makuta_2021}]. Simultaneously, it is worth pointing out that in the qubit case, the maximal dimension of a GME stabilizer subspace for a given $N$ is $2^{N-k(N)}$, where $k(N)=\lceil (1+\sqrt{8N-7})/2\rceil$ \cite{Makuta_2021}. Thus, Theorem \ref{thm:mfnl} establishes that GME and GMNL are equivalent notions for a large class of multipartite (mixed) states, and, moreover, it identifies a large class of states that are fully nonlocal. This is significant, as the only known example of a multipartite fully nonlocal mixed state is one composed of five copies of the Smolin state \cite{Smolin}, provided in Ref. \cite{Almeida_2010}.

\textit{Bounding the genuine nonlocality content.}
Lemma \eqref{lem:MFNL} relies on the proof of full nonlocality of the bipartite maximally entangled states which is achieved from violation of the chained Bell inequalities \cite{chained,Barrett_2006} in the limit of the number of local measurements taken to infinity. What is more, it does not account for noises and experimental imperfections. All this makes our results hardly testable in experiments. Here we address this issue by deriving a general bound on the genuine nonlocality content $\tilde{p}_{NL}$ 
of any behavior that tolerates noises and applies to scenarios with a finite number of local measurements.

\begin{thm}\label{thm:inequality}
Given a state $\rho$ and two parties $\alpha$ and $\bar{\alpha}$,
let us denote $\tilde{p}_{NL}^{\alpha|\overline{\alpha}}=\max_{\mathbf{x}_{R}} \min_{\mathbf{a}_{R}}\tilde{p}_{NL}^{\alpha|\overline{\alpha}}(\mathbf{a}_{R},\mathbf{x}_{R})$, where $\tilde{p}_{NL}^{\alpha|\overline{\alpha}}(\mathbf{a}_{R},\mathbf{x}_{R})$ is the minimal nonlocality content $p_{NL}$ of a bipartite state shared by $\alpha,\overline{\alpha}$ that was created by performing local measurements $\mathbf{x}_{R}$ by the remaining parties $R=[N]\setminus \{\alpha,\overline{\alpha}\}$ and corresponding to the outcomes $\mathbf{a}_{R}$. Then, $\tilde{p}_{NL}$ of $\mathcal{P}$ is lower-bounded as
\begin{equation}\label{ChanChan}
\tilde{p}_{NL}\geqslant 1 - \frac{1}{N-1}\sum_{\alpha=1}^{N}\sum_{\overline{\alpha}>\alpha}\left(1-\tilde{p}_{NL}^{\alpha|\overline{\alpha}}\right).
\end{equation}
\end{thm}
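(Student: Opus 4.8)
The plan is to relate the global EPR-2 decomposition \eqref{eq:decomposition} of $\mathcal{P}$ to the bipartite decompositions it induces on each pair of parties once the remaining parties measure. I would fix an optimal global decomposition achieving $p_{NL}=\tilde{p}_{NL}$, so that the biseparable weights satisfy $\sum_{Q}p_{Q|\overline{Q}}=1-\tilde{p}_{NL}$. The guiding observation is combinatorial: every nontrivial bipartition $Q|\overline{Q}$ separates exactly $|Q|\,|\overline{Q}|$ of the $\binom{N}{2}$ pairs $(\alpha,\overline{\alpha})$, and since $|Q|(N-|Q|)\geq N-1$ for all $1\leq|Q|\leq N-1$, each term in \eqref{eq:decomposition} splits at least $N-1$ pairs. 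The factor $1/(N-1)$ in \eqref{ChanChan} will emerge precisely from this counting.

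First I would fix a pair $(\alpha,\overline{\alpha})$ and choose for the remaining parties $R=[N]\setminus\{\alpha,\overline{\alpha}\}$ the setting $\mathbf{x}_{R}^{\ast}$ attaining the outer maximum in the definition of $\tilde{p}_{NL}^{\alpha|\overline{\alpha}}$. Conditioning the global behavior on an outcome $\mathbf{a}_{R}$ turns \eqref{eq:decomposition} into a convex decomposition of the induced bipartite behavior on $\alpha,\overline{\alpha}$, with conditional weights $\tilde{w}_{Q}(\mathbf{a}_{R})=p_{Q|\overline{Q}}\,P_{Q|\overline{Q}}(\mathbf{a}_{R}|\mathbf{x}_{R}^{\ast})/P(\mathbf{a}_{R}|\mathbf{x}_{R}^{\ast})$ for the biseparable terms and an analogous weight for $P_{NL}$. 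The key structural step is to show that whenever $Q|\overline{Q}$ separates $\alpha$ from $\overline{\alpha}$, the conditioned term is \emph{local} across the $\alpha|\overline{\alpha}$ cut: the hidden-variable factorization \eqref{eq:locality} sends the outcomes of $R\cap Q$ and $R\cap\overline{Q}$ to the two sides, so conditioning on $\mathbf{a}_{R}$ merely reweights the hidden variable and keeps the behaviors of $\alpha$ and $\overline{\alpha}$ factorized. Grouping the remaining (non-separating and genuinely nonlocal) terms into a single residual non-signaling behavior then yields a valid bipartite EPR-2 decomposition, whence $\tilde{p}_{NL}^{\alpha|\overline{\alpha}}(\mathbf{a}_{R},\mathbf{x}_{R}^{\ast})\leq 1-\sum_{Q\,\mathrm{sep}}\tilde{w}_{Q}(\mathbf{a}_{R})$.

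Next I would remove the conditioning. By the max--min definition, $\tilde{p}_{NL}^{\alpha|\overline{\alpha}}\leq\tilde{p}_{NL}^{\alpha|\overline{\alpha}}(\mathbf{a}_{R},\mathbf{x}_{R}^{\ast})$ for every outcome $\mathbf{a}_{R}$, so the inequality above gives $\sum_{Q\,\mathrm{sep}}\tilde{w}_{Q}(\mathbf{a}_{R})\leq 1-\tilde{p}_{NL}^{\alpha|\overline{\alpha}}$ for all $\mathbf{a}_{R}$. Multiplying by $P(\mathbf{a}_{R}|\mathbf{x}_{R}^{\ast})$ and summing over $\mathbf{a}_{R}$ collapses the conditional weights back to the unconditional ones, since $\sum_{\mathbf{a}_{R}}P_{Q|\overline{Q}}(\mathbf{a}_{R}|\mathbf{x}_{R}^{\ast})=1$; this produces the clean estimate $\sum_{Q\,\mathrm{sep}\,\alpha|\overline{\alpha}}p_{Q|\overline{Q}}\leq 1-\tilde{p}_{NL}^{\alpha|\overline{\alpha}}$, valid for each pair.

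Finally I would sum over all pairs $\alpha<\overline{\alpha}$ and interchange the order of summation. On the left each weight $p_{Q|\overline{Q}}$ is counted once for every pair the bipartition separates, i.e.\ $|Q|(N-|Q|)\geq N-1$ times, so the left-hand side is at least $(N-1)\sum_{Q}p_{Q|\overline{Q}}=(N-1)(1-\tilde{p}_{NL})$, while the right-hand side equals $\sum_{\alpha<\overline{\alpha}}(1-\tilde{p}_{NL}^{\alpha|\overline{\alpha}})$. Dividing by $N-1$ and rearranging yields \eqref{ChanChan}. The main obstacle I anticipate is the structural step: one must verify carefully that conditioning a non-signaling, bipartition-local term on the outcomes of the intermediate parties preserves both non-signaling and locality across $\alpha|\overline{\alpha}$, and that the leftover terms assemble into a bona fide non-signaling bipartite behavior, so that the bipartite EPR-2 content is genuinely upper-bounded; the remainder is bookkeeping of weights together with the elementary counting inequality $|Q|(N-|Q|)\geq N-1$.
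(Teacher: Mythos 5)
Your proof is correct, and it reaches \eqref{ChanChan} by a route that differs from the paper's in two genuine ways, even though the overall skeleton (condition the optimal global decomposition on $\mathbf{a}_{R}$, extract a per-pair bound, then count multiplicities) is the same. First, you work directly with the coarse bipartition-level decomposition \eqref{eq:decomposition}, whereas the paper first refines it into the partition-level decomposition \eqref{eq:behavior_decomposition} over all partitions $t=\{S_{1},\dots,S_{m}\}$ and sorts the terms into $T_{L}^{\alpha,\overline{\alpha}}$ and $T_{NL}^{\alpha,\overline{\alpha}}$. Your coarser version suffices because you only need an \emph{upper} bound on $\tilde{p}_{NL}^{\alpha|\overline{\alpha}}(\mathbf{a}_{R},\mathbf{x}_{R}^{\ast})$, i.e.\ you only need the separating terms to remain local across $\alpha|\overline{\alpha}$ after conditioning and the leftover to be some non-signaling behavior; you do not need the residual part to be genuinely nonlocal, since any such decomposition can be further refined without increasing the local weight. (It is worth stating this explicitly, and also noting that outcomes with $P(\mathbf{a}_{R}|\mathbf{x}_{R}^{\ast})=0$ or terms with $P_{Q|\overline{Q}}(\mathbf{a}_{R}|\mathbf{x}_{R}^{\ast})=0$ can be discarded.) Your structural step does hold: writing $P(\mathbf{a}_{Q}|\mathbf{x}_{Q},\lambda)=P(a_{\alpha}|\mathbf{a}_{Q\cap R},\mathbf{x}_{Q},\lambda)P(\mathbf{a}_{Q\cap R}|\mathbf{x}_{Q\cap R},\lambda)$ via non-signaling and absorbing $(\lambda,\mathbf{a}_{R})$ into a new hidden variable shows the conditioned separating term is of the form \eqref{eq:locality} for $\alpha|\overline{\alpha}$; this is exactly what the paper does implicitly when asserting that its assignment satisfies \eqref{eq:epr-2_alpha}. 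Second, your counting is symmetric over all $\binom{N}{2}$ pairs, using that a bipartition with $|Q|=k$ separates $k(N-k)\geqslant N-1$ pairs, which makes the origin of the factor $1/(N-1)$ transparent; the paper instead anchors the sum at party $1$, bounds the leftover $T_{NL}^{\alpha,1}$ weights by the remaining pairs via the comparison $(|S_{1}|-1)\leqslant(|S_{1}|-1)(N-|S_{1}|)$, and reassembles the full pair sum at the end. Both routes yield the identical bound; yours is the more elementary bookkeeping, while the paper's refined decomposition has the cosmetic advantage that its conditional decomposition \eqref{eq:epr-2_alpha} is exactly of EPR-2 form with a genuinely nonlocal residue.
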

We present the proof in Appendix \ref{app:ineq} Note, that the above can also be used to lower bound $\tilde{p}_{NL}$ of a state or a subspace $V\subseteq \mathcal{H}$; in the latter case one needs to minimize $\tilde{p}_{NL}^{\alpha|\overline{\alpha}}$ over all $\rho\in\mathcal{B}(V)$.

There are multiple of ways in which Theorem \ref{thm:inequality} can be used. First, together with the results of Ref. \cite{Barrett_2006}, it enables determining the minimal number of measurements necessary to detect GMNL of $N$-partite systems by violating the chained Bell inequality between every pair of parties.
Let us consider for instance the two-dimensional subspace $V_5$ corresponding to the five-qubit code. Assuming that $\tilde{p}_{NL}^{\alpha|\overline{\alpha}}$ is the same for all pairs $\alpha>\overline{\alpha}$, it follows that to detect GMNL ($\tilde{p}_{NL}>0$) of this subspace one needs $p_{NL}^{\alpha|\overline{\alpha}}>3/5$. This can be achieved from violation of the chained Bell inequality with at least four measurements per observer (see Appendix \ref{app:ineq} for more details).

On the other hand, Theorem \ref{thm:inequality} allows for an estimation of genuine nonlocality content $\tilde{p}_{NL}$ from the experimentally observed bipartite $\tilde{p}_{NL}^{\alpha|\overline{\alpha}}$.
%
For instance, in Ref. \cite{PhysRevX.5.041052} the value of $\tilde{p}_{NL}^{\alpha|\overline{\alpha}}=0.874 \pm 0.001$ has been achieved experimentally for the two-qubit maximally entangled state. 
Assuming that such $\tilde{p}_{NL}^{\alpha|\overline{\alpha}}$ could be achieved for every pair of qubits, 
this would imply that the genuine nonlocality content of $V_5$ is
%
%
$\tilde{p}_{NL}\gtrsim 0.685$.

\textit{Qudit graph states.}
A natural extension of Theorem \ref{thm:mfnl} would be to consider the stabilizer formalism with all local systems being $d$-dimensional. Unfortunately, it turns out that one cannot directly generalize Lemma \ref{lem:two_qubits} to higher $d>2$. Nevertheless, using our approach we can still prove that simplest one-dimensional qudit stabilizer subspaces, which are in fact local-unitarily equivalent to multiqudit graph states, we can conclude that they are all MFNL; the same conclusion for graph states $d=2$ was derived before in  Ref. \cite{Almeida_2010}.

The qudit graph states are defined as follows: let $G$ be a multigraph. Then a graph state $\ket{G}$ associated to $G$ is one that is stabilized by $\mathbb{S}_{G}=\langle g_{1},\dots,g_{N}\rangle$ with
\begin{equation}\label{eq:graph}
g_{j}=\mathbf{X}_{j} \prod_{l=1}^{N} \mathbf{Z}_{l}^{\Gamma_{j,l}},
\end{equation}
where $\Gamma_{j,l}$ denotes the number of edges connecting vertices $j$ and $l$ in the graph $G$ and
\begin{equation}
\mathbf{X}=\sum_{j=0}^{d-1}\ket{j+1}\!\! \bra{j},\quad \mathbf{Z}= \sum_{j=0}^{d-1} \exp(2\pi \mathbb{i} j/d) \ket{j}\!\! \bra{j}
\end{equation}
are the generalized Pauli matrices. Notice that each qudit of $\ket{G}$ is associated with a vertex in $G$.

It is known that a graph state $\ket{G}$ is GME iff the graph $G$ is connected. This implies that for every bipartition $Q|\overline{Q}$, there exists a pair of vertices $i,j$ such that $i\in Q$, $j\in \overline{Q}$ and $\Gamma_{i,j}\neq 0$. To show that each GME graph state is MFNL, we begin by performing measurements in the computational basis $\{\ket{j}\}_{j=0}^{d-1}$ on every qudit apart from $i$ and $j$. One then finds that the post-measurement state $\ket{\psi_{i,j}}$ shared by parties $i$ and $j$ is stabilized by $\mathbf{X}_{i} \mathbf{Z}_{j}^{\Gamma_{i,j}}$ and $\mathbf{Z}_{i}^{\Gamma_{i,j}} \mathbf{X}_{j}$. One checks that these operators uniquely identify the state $\ket{\psi}$ as a maximally entangled state $\ket{\phi_{+}}$ \eqref{eq:max} (again up to local unitaries) for $q=d/r$, where $r$ is the greatest common divisor of $d$ and $\Gamma_{i,j}$. Since this procedure can be performed for any bipartition, Lemma \ref{lem:MFNL} allows us to conclude that every qudit graph state is MFNL.

\textit{Outlook and discussion.} There are still a few open questions to be explored. The most obvious one is whether qudit stabilizer subspaces that are genuinely multipartite entangled, are also multipartite fully nonlocal. As we discussed above, the approach used here which makes use of Lemma \ref{lem:two_qubits} is not suitable for that purpose, however, it is likely that this relationship between genuine multipartite entanglement and multipartite full nonlocality persists for any local dimension. 

Alternatively, if one were interested only in testing genuine multipartite nonlocality, then a possibly better strategy would be to directly construct suitable Bell inequalities detecting it, such as those provided in Refs. \cite{Bancal_2012,Aolita,Curchod_2019}.

Another question to explore is whether it is possible to experimentally determine the genuine nonlocality content of multipartite stabilizer mixed states, building on our results. Multiple experiments aiming to determine the nonlocality content of bipartite states have already been performed (see, e.g., Refs. \cite{PhysRevLett.95.240406,Gallego1,PhysRevX.5.041052, PhysRevLett.118.130403} and references therein). The question is thus whether they can be combined with our Theorem \ref{thm:inequality} to provide results for the multipartite scenario.

One can finally ask whether a tighter bound on the genuine nonlocality content than that in Eq. (\ref{ChanChan}) can be obtained, for instance throughout including genuine nonlocality contents of $m$-partite states $(m<N)$ created by the remaining parties.

\begin{acknowledgments}
\textit{Note added.}
While finishing this manuscript, we became aware that another group is independently working on a related question (see "Conclusion and outlook" in Ref. \cite{Zwerger_2019}).

\textit{Acknowledgments.}
We thank Ignacy Stachura and Błażej Kuzaka for helpful discussions. This work is supported by 
the National Science Centre (Poland) through the SONATA BIS project No. 2019/34/E/ST2/00369. This project has received funding from the European Union’s Horizon Europe research and innovation programme under grant agreement No 101080086 NeQST.
\end{acknowledgments}

\bibliographystyle{apsrev4-1}
%

\onecolumngrid
\appendix

\section{Proof of Lemma \ref{lem:two_qubits}}\label{app:two_qubits}

To start, let us introduce the formalism used in the proof of Lemma \ref{lem:two_qubits}, i.e., the formalism of commutation matrices \cite{Englbrecht_2022}. Let us consider a stabilizer $\mathbb{S}=\langle g_{1},\dots,g_{k}\rangle$ and some bipartition $Q|\overline{Q}$. Using the fact that $ZX=- XZ$ we can write
\begin{equation}
g_{i}^{(Q)}g_{j}^{(Q)}g_{i}^{(Q)}g_{j}^{(Q)}=(-1)^{c_{i,j}^{Q}}\mathbb{1},
\end{equation}
where $c_{i,j}^{Q}\in\{0,1\}$ depends on the exact form of $g_{i}$ and $g_{j}$. The commutation matrix $C^{Q}$ corresponding to a bipartition $Q$ is constructed in the following way
\begin{equation}
\left(C^{Q}\right)_{i,j}=c_{i,j}^{Q}.
\end{equation}
While this is a general construction for any set $Q$, for our purposes we only need $Q$ to be a singular party. Therefore, to make the notation easier, we will drop the "set" notation from the upper index, i.e., commutation matrix $C^{\alpha}$ corresponds to the party $\alpha$. With this notation, we formulate the following fact which corresponds to Eq. (26) in \cite{Englbrecht_2022}.
\begin{fact}\label{fact:com_matrices}
For the set of commutation matrices corresponding to a stabilizer with a local dimension $2$ we have
\begin{equation}
\sum_{\alpha\in\{1,\dots,N\}}C^{\alpha}=0,
\end{equation}
where the addition is modulo $2$.
\end{fact}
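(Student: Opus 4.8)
The plan is to reduce the claimed identity to the defining abelian property of $\mathbb{S}$, by tracing the sign $(-1)^{c_{i,j}^{\alpha}}$ back to local (anti)commutation of single-qubit Paulis. First I would establish what $c_{i,j}^{\alpha}$ actually measures. Since every single-qubit Pauli matrix squares to the identity, setting $P=g_{i}^{(\alpha)}$ and $R=g_{j}^{(\alpha)}$ one has $PRPR=P(RP)R$; because any two single-qubit Paulis either commute or anticommute, this evaluates to $+\mathbb{1}$ in the commuting case (giving $c_{i,j}^{\alpha}=0$) and to $-\mathbb{1}$ in the anticommuting case (giving $c_{i,j}^{\alpha}=1$). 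Thus $c_{i,j}^{\alpha}$ is exactly the indicator that the local Pauli matrices of $g_{i}$ and $g_{j}$ at site $\alpha$ anticommute.

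Next I would pass from a single site to the full $N$-qubit operators. Writing $g_{i}=\bigotimes_{\alpha}g_{i}^{(\alpha)}$ and $g_{j}=\bigotimes_{\alpha}g_{j}^{(\alpha)}$ (any global phase $\pm 1,\pm\mathbb{i}$ carried by elements of $\mathbb{P}_{N}$ commutes through everything and cancels in a commutator, so it is irrelevant), I collect the per-site signs $g_{i}^{(\alpha)}g_{j}^{(\alpha)}=(-1)^{c_{i,j}^{\alpha}}g_{j}^{(\alpha)}g_{i}^{(\alpha)}$ across the tensor factors to obtain
\begin{equation}
g_{i}g_{j}=(-1)^{\sum_{\alpha=1}^{N}c_{i,j}^{\alpha}}\,g_{j}g_{i}.
\end{equation}
In words, $g_{i}$ and $g_{j}$ commute as $N$-qubit operators precisely when they anticommute on an even number of sites.

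Finally I would invoke the hypothesis that $\mathbb{S}$ is a stabilizer, hence abelian: every pair $g_{i},g_{j}\in\mathbb{S}$ satisfies $g_{i}g_{j}=g_{j}g_{i}$, so the exponent above must vanish modulo $2$, i.e. $\sum_{\alpha=1}^{N}c_{i,j}^{\alpha}\equiv 0\pmod 2$ for every pair $i,j$. Reading this entrywise, $\bigl(\sum_{\alpha}C^{\alpha}\bigr)_{i,j}=\sum_{\alpha}c_{i,j}^{\alpha}=0$ with addition modulo $2$, which is exactly $\sum_{\alpha}C^{\alpha}=0$ as claimed.

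I do not expect a serious obstacle here; the argument is a direct sign count. The only point demanding care is the bookkeeping of the per-site signs and the observation that the global phases drop out of the commutator, so that the \emph{local} anticommutation pattern faithfully controls the \emph{global} commutation relation. The essential and only nontrivial input is the abelian structure of $\mathbb{S}$, which forces an even number of local anticommutations for each pair of its elements.
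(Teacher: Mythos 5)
Your argument is correct, and it is worth noting that the paper itself does not prove this statement at all: Fact~\ref{fact:com_matrices} is simply imported from Eq.~(26) of Ref.~\cite{Englbrecht_2022}. Your sign-counting derivation therefore supplies a self-contained proof where the paper offers only a citation, and it is the standard one: identify $c_{i,j}^{\alpha}$ with the local anticommutation indicator, observe that $g_i g_j=(-1)^{\sum_\alpha c_{i,j}^{\alpha}}g_j g_i$, and conclude from the abelian property of $\mathbb{S}$ that the exponent vanishes mod $2$ for every pair $i,j$, which is the entrywise statement $\sum_\alpha C^\alpha=0$. The one point that deserves slightly more care than you give it is your claim that ``every single-qubit Pauli matrix squares to the identity'': the paper's Pauli group allows local factors carrying phases (its own example uses $XZ=-\mathbb{i}Y$, for which $(XZ)^2=-\mathbb{1}$), so the literal defining relation $g_i^{(\alpha)}g_j^{(\alpha)}g_i^{(\alpha)}g_j^{(\alpha)}=(-1)^{c_{i,j}^{\alpha}}\mathbb{1}$ then differs from the anticommutation indicator by the product of the local squares $\bigl(g_i^{(\alpha)}\bigr)^2\bigl(g_j^{(\alpha)}\bigr)^2$. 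This is harmless in the end --- for a stabilizer element $g_i^2=\mathbb{1}$ forces the product over $\alpha$ of those local squares to be $+\mathbb{1}$, so the total sign is unaffected --- but your dismissal of phases as ``global'' quantities that cancel does not quite cover the case of phases hidden inside the local tensor factors, and a complete write-up should either restrict the local factors to Hermitian Paulis or add the one-line bookkeeping just described.
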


Given a stabilizer $\mathbb{S}$, there exist many equivalent choices generating sets that generate $\mathbb{S}$, for example  $\langle X\otimes X, Z\otimes Z\rangle=\langle X\otimes X, XZ\otimes XZ\rangle$. The transformations between one generating set and another are represented, in the commutation matrices formalism, by matrices $A\in M_{k\times k}(\mathbb{Z}_{d})$ acting on the commutation matrix $C^{\alpha}$
\begin{equation}\label{eq:commutation_transformation}
\tilde{C}^{\alpha}=A^{T}C^{\alpha} A,
\end{equation}
where $A$ has to be invertible for this to be a proper transformation between two generating sets of the same stabilizer, and the operations are performed modulo $2$.

As the proof relies on the commutation matrices, we need to first translate the GME condition in the language of this formalism.
\setcounter{lem}{4}
\begin{lem}\label{lem:gme_com}
Let $\mathbb{S}=\langle g_{1},\dots,g_{k}\rangle$ be a stabilizer with a stabilizer subspace $V_{\mathcal{S}}$, and let $C^{\alpha}$ be a set of commutation matrices associated to the generating set $\{g_{i}\}_{i=1}^{k}$. If $V_{\mathbb{S}}$ is GME then this implies that for all $Q\subset [N]$ we have
\begin{equation}
\sum_{\alpha\in Q} C^{\alpha} \neq 0.
\end{equation}
\end{lem}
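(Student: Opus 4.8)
The plan is to reduce the statement to Lemma~\ref{lem:gme} by relating the ``set'' commutation matrix $C^{Q}$ attached to a bipartition $Q|\overline{Q}$ to the single-party commutation matrices $C^{\alpha}$. Concretely, I would first prove the additivity identity
\begin{equation}\label{eq:additivity}
C^{Q}=\sum_{\alpha\in Q}C^{\alpha} \pmod 2,
\end{equation}
and then observe that the GME criterion of Lemma~\ref{lem:gme}, once phrased in the commutation-matrix language, is precisely the nonvanishing of $C^{Q}$ for every nontrivial bipartition. Combining the two gives the claim immediately.

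To establish \eqref{eq:additivity}, fix a pair $i,j\in[k]$. Since each restricted operator factorises as $g_{i}^{(Q)}=\bigotimes_{\alpha\in Q}g_{i}^{(\alpha)}$, the product $g_{i}^{(Q)}g_{j}^{(Q)}$ factorises qubit by qubit, and on the tensor factor $\alpha$ one has $g_{i}^{(\alpha)}g_{j}^{(\alpha)}=(-1)^{c_{i,j}^{\alpha}}g_{j}^{(\alpha)}g_{i}^{(\alpha)}$. Collecting the signs across all $\alpha\in Q$ yields $g_{i}^{(Q)}g_{j}^{(Q)}=(-1)^{\sum_{\alpha\in Q}c_{i,j}^{\alpha}}g_{j}^{(Q)}g_{i}^{(Q)}$, so that $c_{i,j}^{Q}=\sum_{\alpha\in Q}c_{i,j}^{\alpha}\pmod 2$. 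As this holds entrywise for every $i,j$, the matrix identity \eqref{eq:additivity} follows. Equivalently, one may note that Pauli commutation is governed by a symplectic bilinear form over $\FF_{2}^{2N}$ that is additive over tensor factors, which makes \eqref{eq:additivity} manifest and phase-independent.

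Next I would rephrase Lemma~\ref{lem:gme}. The condition $[g_{i}^{(Q)},g_{j}^{(Q)}]\neq 0$ says exactly that the two restricted Paulis anticommute, i.e.\ $c_{i,j}^{Q}=1$; hence the existence of such a pair $i,j$ for a given bipartition is the statement that $C^{Q}$ has a nonzero entry, i.e.\ $C^{Q}\neq 0$. Lemma~\ref{lem:gme} thus reads: $V_{\mathbb{S}}$ is GME iff $C^{Q}\neq 0$ for every bipartition $Q|\overline{Q}$. Feeding in \eqref{eq:additivity}, genuine entanglement of $V_{\mathbb{S}}$ forces
\begin{equation}
\sum_{\alpha\in Q}C^{\alpha}=C^{Q}\neq 0
\end{equation}
for every nonempty proper $Q\subset[N]$, which is the assertion.

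I do not expect a genuine obstacle; the steps requiring care are purely bookkeeping. One should take the restricted operators $g_{i}^{(\alpha)}$ as Hermitian single-qubit Paulis squaring to $\mathbb{1}$, so that commutation versus anticommutation is faithfully recorded by the sign in $g_{i}^{(Q)}g_{j}^{(Q)}g_{i}^{(Q)}g_{j}^{(Q)}=(-1)^{c_{i,j}^{Q}}\mathbb{1}$ (otherwise stray signs from $g_{i}^{(Q)}$ squaring to $-\mathbb{1}$ could contaminate $c_{i,j}^{Q}$); the symplectic viewpoint sidesteps this entirely. One should also note that $Q$ ranges over nonempty proper subsets, which is consistent with Fact~\ref{fact:com_matrices}: the relation $\sum_{\alpha\in[N]}C^{\alpha}=0$ gives $C^{Q}=C^{\overline{Q}}$, so the required bipartition symmetry of the condition is automatic, and the cases $Q=\emptyset$ and $Q=[N]$ are correctly excluded.
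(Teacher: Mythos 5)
Your proof is correct and follows essentially the same route as the paper's: both reduce the claim to Lemma~\ref{lem:gme} via the additivity $C^{Q}=\sum_{\alpha\in Q}C^{\alpha}$ of the commutation matrices over tensor factors. The paper merely phrases it as a one-line contrapositive and leaves that additivity (and the sign caveat about non-Hermitian local factors such as $XZ$) implicit, whereas you spell both out explicitly.
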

\begin{proof}
We will show this by the contradiction. Let us assume that $V_{\mathbb{S}}$ is GME and there exists $Q\subset [N]$ such that
\begin{equation}
    \sum_{\alpha\in Q} C^{\alpha} =0.
\end{equation}
This implies that for all $i,j\in [k]$ we have
\begin{equation}
\left[g_{i}^{(Q)},g_{j}^{(Q)}\right]=0,
\end{equation}
which contradicts Lemma 1 from the main text.
\end{proof}

Let us recall here Lemma \ref{lem:two_qubits} after which we proceed with the proof.
\setcounter{lem}{2}
\begin{lem}
Let $\mathbb{S}$ be a stabilizer. Stabilizer subspace $V_{\mathbb{S}}$ is GME iff for every pair of qubits $\alpha_{1},\alpha_{2}\in [N]$ there exists a pair of stabilizing operators $s_{i},s_{j}\in\mathbb{S}$ such that
\begin{equation}
\left[s_{i}^{(\alpha_{l})},s_{j}^{(\alpha_{l})}\right]\neq0, \quad \left[s_{i}^{(\alpha)},s_{j}^{(\alpha)}\right]=0
\end{equation}
for all $[N]\setminus\{\alpha_{1},\alpha_{2}\}$ and all $l\in \{1,2\}$.
\end{lem}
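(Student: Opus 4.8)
The plan is to prove the two implications separately, spending almost all of the effort on the forward direction.

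For the easy implication ``$\Leftarrow$'' (the pair condition implies that $V_{\mathbb{S}}$ is GME) I would simply reduce to Lemma~\ref{lem:gme}. Fix an arbitrary bipartition $Q|\overline{Q}$, pick $\alpha_{1}\in Q$ and $\alpha_{2}\in\overline{Q}$, and take the pair $s_{i},s_{j}\in\mathbb{S}$ guaranteed by the hypothesis. Their local parts anticommute at $\alpha_{1}\in Q$ but commute at every other site of $Q$ (since $\alpha_{2}\notin Q$), so $[s_{i}^{(Q)},s_{j}^{(Q)}]\neq0$. Writing $s_{i},s_{j}$ as products of the generators, a nonzero commutator on $Q$ forces at least one pair of generators to anticommute on $Q$, which is exactly the condition of Lemma~\ref{lem:gme} for this bipartition; as the bipartition was arbitrary, $V_{\mathbb{S}}$ is GME.

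For the hard implication ``$\Rightarrow$'' I would pass entirely to the commutation-matrix language. Writing $s_{i}=\prod_{l}g_{l}^{u_{l}}$ and $s_{j}=\prod_{l}g_{l}^{v_{l}}$ with $\mathbf{u},\mathbf{v}\in\mathbb{Z}_{2}^{k}$, bilinearity of the symplectic form gives that the commutation of $s_{i},s_{j}$ at site $\alpha$ equals $\mathbf{u}^{T}C^{\alpha}\mathbf{v}$. Hence producing the required pair for $\{\alpha_{1},\alpha_{2}\}$ is equivalent to finding $\mathbf{u},\mathbf{v}$ whose \emph{anticommutation pattern} $\bigl(\mathbf{u}^{T}C^{\alpha}\mathbf{v}\bigr)_{\alpha\in[N]}\in\mathbb{Z}_{2}^{N}$ equals $e_{\alpha_{1}}+e_{\alpha_{2}}$. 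Two structural inputs are available. Fact~\ref{fact:com_matrices} shows that every achievable pattern has even weight, so a weight-two target is at least consistent. Lemma~\ref{lem:gme_com} shows that the only linear relation among $C^{1},\dots,C^{N}$ is $\sum_{\alpha}C^{\alpha}=0$. The latter already implies that the achievable patterns \emph{span} the even-weight subspace $E\subseteq\mathbb{Z}_{2}^{N}$: letting $T\colon\mathbb{Z}_{2}^{k}\otimes\mathbb{Z}_{2}^{k}\to\mathbb{Z}_{2}^{N}$ be the linear extension of the bilinear map, one has $\mathrm{span}\{(\mathbf{u}^{T}C^{\alpha}\mathbf{v})_{\alpha}\}=\mathrm{Im}\,T=(\ker T^{*})^{\perp}$, and $\ker T^{*}=\{\mathbf{w}:\sum_{\alpha}w_{\alpha}C^{\alpha}=0\}=\{0,\mathbf{1}\}$ with $\mathbf{1}$ the all-ones vector, so $\mathrm{Im}\,T=\{0,\mathbf{1}\}^{\perp}=E$.

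The remaining and genuinely hard step is to upgrade this span statement to the claim that each individual weight-two vector $e_{\alpha_{1}}+e_{\alpha_{2}}$ is realized by a \emph{single} rank-one pair $(\mathbf{u},\mathbf{v})$, since the value set of a bilinear map need not be a subspace. Concretely, fixing $\mathbf{u}$, the admissible patterns form the column space of the $N\times k$ matrix $M_{\mathbf{u}}$ whose $\alpha$-th row is $(C^{\alpha}\mathbf{u})^{T}$, and a short orthogonality computation shows $e_{\alpha_{1}}+e_{\alpha_{2}}\in\mathrm{Im}\,M_{\mathbf{u}}$ precisely when $C_{Q}\mathbf{u}\neq0$ for every $Q$ separating $\alpha_{1}$ from $\alpha_{2}$, where $C_{Q}:=\sum_{\alpha\in Q}C^{\alpha}$. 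Lemma~\ref{lem:gme_com} guarantees $C_{Q}\neq0$, but a crude union bound over the $\sim 2^{N-2}$ separating sets does not produce a common good $\mathbf{u}$, so this is where I would bring in the extra structure that each $C^{\alpha}$ has rank at most two, being $\mathbf{a}^{\alpha}(\mathbf{b}^{\alpha})^{T}+\mathbf{b}^{\alpha}(\mathbf{a}^{\alpha})^{T}$ built from the single-qubit $X$- and $Z$-parts of the generators at site $\alpha$. Using the generating-set freedom~\eqref{eq:commutation_transformation} to normalize $C^{\alpha_{1}}$ so that only two generators act nontrivially at $\alpha_{1}$, and exploiting that GME makes every local image $\{s^{(\alpha)}:s\in\mathbb{S}\}$ all of $\mathbb{Z}_{2}^{2}$, I would then build the desired $\mathbf{u}$ (equivalently a stabilizer element whose support contains $\alpha_{1},\alpha_{2}$ and admits no proper sub-support separating them that still commutes with all of $\mathbb{S}$) and clear the unwanted anticommutations site by site. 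I expect this existence-of-a-good-$\mathbf{u}$ step to be the main obstacle, as it is exactly the point where the rank-two structure and the all-bipartitions strength of GME must be used together rather than handled by counting alone.
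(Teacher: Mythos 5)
Your reduction is sound and, up to the point where you stop, it essentially coincides with the paper's own strategy: the ``$\Leftarrow$'' direction via Lemma \ref{lem:gme}, the translation of ``$\Rightarrow$'' into the commutation-matrix condition $\mathbf{u}^{T}C^{\alpha}\mathbf{v}=\delta_{\alpha,\alpha_1}+\delta_{\alpha,\alpha_2}$, and the use of Fact \ref{fact:com_matrices} and Lemma \ref{lem:gme_com} are all exactly what the paper does. Your observation that, for fixed $\mathbf{u}$, the target pattern is achievable iff $C_{Q}\mathbf{u}\neq0$ for every $Q$ separating $\alpha_{1}$ from $\alpha_{2}$ is equivalent (with the roles of $\mathbf{u}$ and $\mathbf{v}$ swapped) to the condition the paper imposes on its vector $v$ in Eq.~\eqref{eq:v^1}. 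The problem is that you stop precisely at the step that constitutes the actual content of the lemma: you never prove that a single vector satisfying all $\sim 2^{N-2}$ nonvanishing conditions simultaneously exists. You correctly note that the span computation from $\ker T^{*}=\{0,\mathbf{1}\}$ is insufficient because the image of a bilinear map need not be a subspace, and you then gesture at the rank-two decomposition of the $C^{\alpha}$ and a ``site by site'' clearing procedure, but none of this is carried out, and you yourself flag it as the main obstacle. As written, the proposal is a correct reduction plus an unproven existential claim, so it does not establish the lemma.

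For comparison, the paper closes this gap without invoking the rank-two structure at all. It first argues from Lemma \ref{lem:gme_com} (by contradiction: otherwise $C^{\alpha_{1}}=\sum_{\alpha\in I}C^{\alpha}$ for some $I$, contradicting GME) that one can choose $v$ so that $C^{\alpha_{1}}v$ lies outside the span of $\{C^{\alpha}v\}_{\alpha\neq\alpha_{1},\alpha_2}$; it then completes $C^{\alpha_1}v$ to a basis of $\mathrm{span}\{C^{\alpha}v\}$ using vectors drawn from the remaining $C^{\alpha}v$ and takes $\mathbf{u}$ to be the corresponding dual vector, which kills all unwanted anticommutations in one stroke rather than site by site. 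If you want to complete your argument, this ``good $v$ first, dual-basis $\mathbf{u}$ second'' ordering is the missing ingredient. Be aware, though, that the deduction ``for every $v$ some kernel $\ker(C^{\alpha_1}+C_{I})$ contains it, hence some single kernel is all of $\mathbb{Z}_{2}^{k}$'' requires justification over $\mathbb{F}_{2}$, where a vector space can be covered by proper subspaces; whichever variant you adopt should address that point explicitly rather than by a counting or union-bound argument.
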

\begin{proof}
The implication $\Leftarrow$ is triaival by virtue of Lemma 1, so here we only focus on the implication $\Rightarrow$. Let $\mathbb{S}=\langle g_{1},\dots, g_{k}\rangle$ be a stabilizer of a GME subspace $V_{\mathbb{S}}$, and $\{C^{\alpha}\}_{\alpha=1}^{N}$ be a commutation matrix set associated to the generating set $\{g_{i}\}_{i=1}^{k}$. The first step of the proof is to describe the condition in the theorem in terms of the commutation matrix formalism. To this end, we make use of the transformation of the commutation matrices described by Eq. \eqref{eq:commutation_transformation}.

Then the condition of the theorem can be reprised as follows: given the set of matrices $\{C^{\alpha}\}_{\alpha=1}^{N}$ corresponding to a GME stabilizer subspace, for all pairs of qubits $\alpha_{1},\alpha_{2}\in [N]$ there exists a transformation matrix $A$ and a pair of indices $i,j\in [k]$, such that
\begin{align}
\begin{split}
(A^{T} C^{\alpha_{1}} A)_{i,j} &= 1= (A^{T} C^{\alpha_{2}} A)_{i,j} \qquad (A^{T} C^{\alpha} A)_{i,j} =0, \quad \textrm{for all } \alpha \in [N]\setminus \{\alpha_{1},\alpha_{2}\}.
\end{split}
\end{align}

Without a loss of generality, we can take $\alpha_{1}=1$, $\alpha_{2}=N$. From Fact \ref{fact:com_matrices} it follows that $C^{N}$ can be expressed as a sum of the rest of the matrices $C^{\alpha}$, and so we can disregard it, leading us to the condition
\begin{equation}\label{eq:condition_mfnl_proof}
(A^{T} C^{\alpha} A)_{i,j} =\begin{cases}
1 &\textrm{for } \alpha =1,\\
0 &\textrm{for } \alpha\in \{2,\dots,N-1\}.
\end{cases}
\end{equation}
Let $a_{ij}=(A)_{i,j}$ and $c^{\alpha}_{i,j}=(C^{\alpha})_{i,j}$. Then the condition \eqref{eq:condition_mfnl_proof} can be written as
\begin{equation}
\sum_{m,l=1}^{k}a_{m,i}c_{m,l}^{\alpha}a_{l,j}= \delta_{\alpha,1}
\end{equation}
for all $\alpha\in [N]$, where $\delta_{\alpha,1}$ is a Kronecker delta and operations are performed modulo $2$. Notice, that this condition describes an action of two vectors on matrices $C^{\alpha}$. Let
\begin{equation}
v=(a_{1,j},a_{2,j},\dots,a_{k,j})^{T},\quad u=(a_{1,i},a_{2,i},\dots,a_{k,i})^{T}
\end{equation}
This allows us to rewrite the condition as: for all $C^{\alpha}$ there exists a pair of vectors $u,v\in \mathbb{Z}_{2}^{k}$ such that.
\begin{equation}\label{eq:ucv}
u^{T}C^{\alpha}v=\delta_{\alpha,1}.
\end{equation}
For $A$ to be a proper basis change it has to be invertible, which necessitates that $v\neq u$ and $v\neq 0 \neq u$. The latter is fulfilled because $v=0$ imples $u^{T}C^{\alpha}v=0$ for all $u$ and $C^{\alpha}$ (similarly for $u=0$).  As for the $v\neq u$, if we set $v=u$ Eq. \eqref{eq:ucv} equals
\begin{equation}
\sum_{i,j=1}^{k}u_{i}C_{i,j}^{\alpha}u_{j}=\sum_{i=1}^{k}\sum_{j=i}^{k} (u_{i}C_{i,j}^{\alpha}u_{j}-u_{j}C_{i,j}^{\alpha}u_{i})=0,
\end{equation}
where we used the property $(C^{\alpha})_{j,i}=(C^{\alpha})_{i,j}$ and again operations are performed modulo $2$. This again implies that for all $u$ we have $u^{T} C^{\alpha}u=0$. Therefore, the condition \eqref{eq:ucv} implicitly implies that $A$ is invertible.

We can now proceed with the proof proper, i.e., we want to show that from the assumption about $V_{\mathbb{S}}$ being GME implies that \eqref{eq:ucv} holds true for some $u$ and $v$. To this end, we first have to show that for vectors $v^{\alpha}=C^{\alpha}v$, it follows form Lemam \ref{lem:gme_com} that for all $\beta\in [N-1]$ there exists a vector $v$ such that 
\begin{equation}\label{eq:v^beta}
v^{\beta}\neq \sum_{\alpha\in I_{\beta}}  v^{\alpha}
\end{equation}
for all $I_{\beta}\subset [N-1]\setminus \{\beta\}$.

Let us assume the converse, i.e., that there exists $\beta$ such that for all $v$ there exists a choice of $I_{\beta}$ such that
\begin{equation}
v^{\beta}=\sum_{\alpha\in I_{\beta}} v^{\alpha}.
\end{equation}
From the definition of $v^{\alpha}$ we have that
\begin{equation}\label{eq:c_v}
C^{\beta} v = \sum_{\alpha\in I_{\beta}}C^{\alpha}v.
\end{equation}
The fact that this would have to be true for all $v$ implies that
\begin{equation}\label{eq:c}
C^{\beta} = \sum_{\alpha\in I_{\beta}}C^{\alpha},
\end{equation}
which contradicts Lemma \ref{lem:gme_com} together with the assumption that $V_{\mathbb{S}}$ is GME.

With Eq. \eqref{eq:v^beta} proven, let us choose $v$ such that 
\begin{equation}\label{eq:v^1}
v^{1}\neq \sum_{\alpha\in I_{1}}v^{\alpha}
\end{equation}
for all $I_{1}\subset\{2,\dots,N-1\}$. The above construction requires the choice of a specific $v$, but keeping in mind that our goal is to prove Eq. \eqref{eq:ucv}, we still have freedom of choice in regards to $u$. 

Notice, that vectors $\{v^{\alpha}\}_{\alpha=1}^{N-1}$ span a subspace. Let $\{w_{j}\}_{j=1}^{q}$ be a basis chosen from $\{v^{\alpha}\}_{\alpha=1}^{N-1}$ spanning this subspace of dimension $q$, s.t. $w_{1}=v^{1}$. From Eq. \eqref{eq:v^1} and the choice of basis $\{w_{j}\}_{j=1}^{q}$ it follows that for all $J\subset \{2,\dots,q\}$
\begin{equation}
v^{1}\neq \sum_{j\in J}w_{j}
\end{equation}
and also that for all $\alpha \in \{2,\dots,N-1\}$ there exists a subset $J\subset \{2,\dots,q\}$ such that
\begin{equation} \label{eq:w}
v^{\alpha} = \sum_{j\in J} w_{j}.
\end{equation}
Let us define a transformation matrix $T$ by the following relations:
\begin{equation}
Tw_{j}=e_{j}\quad \textrm{for all }j\in [q],
\end{equation}
where $e_{j}$ is a unit vector with one on the $j$'th entry and $0$ elsewhere. Let us take $\tilde{u}=e_{1}$. Then
\begin{equation}
\tilde{u}^{T} T w_{1} = 1,\qquad   \tilde{u}^{T} T w_{j} = 0 \quad\textrm{for all }j\in\{2,3,\dots,q\}.
\end{equation}
Therefore taking $u=T^{T}\tilde{u}=T^{T}e_{1}$ gives us the desired relations
\begin{equation}
u^{T} w_{j}= \delta_{1,j}.
\end{equation}
From the above, it follows
\begin{equation}
u^{T}\sum_{j\in J} w_{j}=0,
\end{equation}
where $J\subset \{2,\dots,q\}$. This together with Eq. \eqref{eq:w} implies that for all $\alpha \in [N]$
\begin{equation}
u^{T}C^{\alpha}v=u^{T}v^{\alpha}=e_{2}T^{T}v^{\alpha}=\delta_{\alpha,1},
\end{equation}
which ends the proof.
\end{proof}

\section{Proof of Theorem \ref{thm:mfnl}}\label{app:gme_gmnl}
\setcounter{thm}{0}
\begin{thm}
A stabilizer subspace $V_{\mathbb{S}}$ is multipartite fully nonlocal iff it is genuinely multipartite entangled.
\end{thm}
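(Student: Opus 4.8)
The plan is to prove the two implications separately, leaning on Lemma~\ref{lem:two_qubits} and Lemma~\ref{lem:MFNL}. The direction ``MFNL $\Rightarrow$ GME'' I would establish by contraposition: if $V_{\mathbb{S}}$ is not GME, then by definition it contains a pure state $\ket{\psi}$ that is biseparable across some cut, i.e. $\ket{\psi}=\ket{\psi_{Q}}\otimes\ket{\phi_{\overline{Q}}}$. Any behavior generated by local measurements on such a product state factorizes across $Q|\overline{Q}$ and is therefore local with respect to that bipartition, so $\ket{\psi}$ is not even GMNL, let alone MFNL. Since $V_{\mathbb{S}}$ then contains a state that is not MFNL, the subspace fails to be MFNL, which is exactly the contrapositive.

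The substance of the theorem lies in the converse ``GME $\Rightarrow$ MFNL'', which generalizes the five-qubit example. Fix an arbitrary state $\rho$ supported on $V_{\mathbb{S}}$ and an arbitrary bipartition $Q|\overline{Q}$, and choose straddling qubits $\alpha_{1}\in Q$, $\alpha_{2}\in\overline{Q}$. Since $V_{\mathbb{S}}$ is GME, Lemma~\ref{lem:two_qubits} supplies operators $s_{i},s_{j}\in\mathbb{S}$ whose local restrictions anticommute at $\alpha_{1}$ and at $\alpha_{2}$ but commute at every other site $\alpha\in R:=[N]\setminus\{\alpha_{1},\alpha_{2}\}$. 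For each $\alpha\in R$ I would measure party $\alpha$ in the common eigenbasis of the commuting single-qubit operators $s_{i}^{(\alpha)},s_{j}^{(\alpha)}$ (which exists precisely because they commute). Writing $\Pi=\bigotimes_{\alpha\in R}\proj{\chi_{\alpha}}$ for the projector onto an observed outcome, the conditional two-qubit state on $\alpha_{1},\alpha_{2}$ is $\sigma=\tfrac{1}{n}\operatorname{Tr}_{R}[\Pi\rho]$.

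The key computation is to feed the stabilizing relations $\rho=s_{i}\rho=s_{j}\rho$ into this expression, just as in the worked example. Factoring $s_{i}=s_{i}^{(\alpha_{1}\alpha_{2})}\otimes s_{i}^{(R)}$ (up to a global sign) and using $\Pi\, s_{i}^{(R)}=\nu_{i}\,\Pi$ with a sign $\nu_{i}\in\{+1,-1\}$ fixed by the outcome, since each $\ket{\chi_{\alpha}}$ is an eigenvector of $s_{i}^{(\alpha)}$, I obtain $\sigma=\nu_{i}\,s_{i}^{(\alpha_{1}\alpha_{2})}\sigma$, and likewise $\sigma=\nu_{j}\,s_{j}^{(\alpha_{1}\alpha_{2})}\sigma$. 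Thus for every measurement outcome on $R$, the state $\sigma$ is stabilized by the pair $\nu_{i}s_{i}^{(\alpha_{1}\alpha_{2})}$, $\nu_{j}s_{j}^{(\alpha_{1}\alpha_{2})}$.

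It remains to argue that any such $\sigma$ is maximally entangled, and this is the step I expect to be the crux. Because $s_{i}^{(\alpha_{1})},s_{j}^{(\alpha_{1})}$ anticommute they are two distinct nontrivial Pauli matrices, and the same holds at $\alpha_{2}$; hence the two stabilizing operators differ on each site (so they are independent) and commute overall (anticommuting twice). Moreover every nontrivial element of the generated group, namely $s_{i}^{(\alpha_{1}\alpha_{2})}$, $s_{j}^{(\alpha_{1}\alpha_{2})}$ and their product, acts nontrivially on both qubits, because a product of two distinct nontrivial single-qubit Paulis is again nontrivial. Consequently no nontrivial stabilizer element is supported on a single party, which forces both single-qubit marginals of $\sigma$ to be maximally mixed; a two-qubit pure state with maximally mixed marginals is, up to local unitaries, the Bell state $\ket{\phi_{+}}$. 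Since this construction goes through for every bipartition and every outcome on the remaining $N-2$ parties, Lemma~\ref{lem:MFNL} (with $q=2$) certifies that $\rho$, and hence the whole subspace $V_{\mathbb{S}}$, is MFNL.
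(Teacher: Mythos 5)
Your proposal is correct and follows essentially the same route as the paper: invoking Lemma~\ref{lem:two_qubits} to find the locally anticommuting pair $s_i,s_j$, measuring the remaining parties in the common eigenbases of $s_i^{(\alpha)},s_j^{(\alpha)}$, and deriving the two signed stabilizing relations that pin down the post-measurement two-qubit state. The only (cosmetic) difference is the final identification of $\sigma$ as $\ket{\phi_+}$ — you argue via maximally mixed marginals of the unique state stabilized by two independent commuting two-qubit Paulis, whereas the paper conjugates the pair by local unitaries to $X\otimes X$ and $Z\otimes Z$; both are valid.
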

\begin{proof}
Trivially, if a subspace is MFNL it implies that it is GMNL, and so it is also GME. Therefore, let us focus on the proof of the implication $\Leftarrow$. Let us consider a stabilizer $\mathbb{S}$ fulfilling the conditions of Lemma \ref{lem:two_qubits}. By the virtue of Eq. \eqref{eq:lem_ncom} for a given bipartition $Q|\overline{Q}$, $s_{i}^{(\alpha)}$ and $s_{j}^{(\alpha)}$ have a common eigenbasis for all $\alpha\in [N]\setminus\{\alpha_{1},\alpha_{2}\}$. Let us denote by $\{\Pi_{a_{\alpha}}^{\alpha}\}_{a_{\alpha}=0}^{1}$ a set of projective measurement operators onto the common eigenbasis of $s_{i}^{(\alpha)}$ and $s_{j}^{(\alpha)}$, where $a_{\alpha}$ is the measurement result.

Let us consider a state $\rho\in\mathcal{B}(V_{\mathbb{S}})$. After performing local measurements $\Pi_{a_{\alpha}}^{\alpha}$ the resulting state $\sigma_{a_{1},\dots,a_{N}}$ shared by parties $\alpha_{1}$ and $\alpha_{2}$ equals
\begin{equation}
\sigma_{a_{1},\dots,a_{N}}=\frac{1}{n}\operatorname{Tr}_{A}\left[\bigotimes_{\alpha=1}^{N}\Pi_{a_{\alpha}}^{\alpha} \cdot \rho\right],
\end{equation}
where we take $\Pi_{a_{\alpha_{1}}}^{\alpha_{1}}=\Pi_{a_{\alpha_{2}}}^{\alpha_{2}}=\mathbb{1}$, $n$ is a normalization constant, and $\operatorname{Tr}_{A}$ is a partial trace over a set of parties $A=[N]\setminus\{\alpha_{1},\alpha_{2}\}$. From $\rho\in\mathcal{B}(V_{\mathbb{S}})$ it follows that
\begin{align}
\begin{split}
\sigma_{a_{1},\dots,a_{N}}
&=\frac{1}{n}\operatorname{Tr}_{A}\left[\bigotimes_{\alpha=1}^{N}\Pi_{a_{\alpha}}^{\alpha} \cdot s_{i}\rho\right]
=\left(\prod_{\alpha \in A}(-1)^{\tau_{i}(a_{\alpha})}\right) s_{i}^{(\{\alpha_{1},\alpha_{2}\})} \frac{1}{n}\operatorname{Tr}_{A}\left[\bigotimes_{\alpha=1}^{N}\Pi_{a_{\alpha}}^{\alpha} \cdot \rho\right]\\
&=\left(\prod_{\alpha \in A}(-1)^{\tau_{i}(a_{\alpha})}\right) s_{i}^{(\{\alpha_{1},\alpha_{2}\})}\sigma_{a_{1},\dots,a_{N}}
\end{split}
\end{align}
where the second equality uses the fact that from the measurement $\Pi_{a_{\alpha}}^{\alpha}$ are performed in the eigenbasis of $s_{i}^{(\alpha)}$, which implies
\begin{equation}
 s_{i}^{(\alpha)}\Pi_{a_{\alpha}}^{\alpha}= (-1)^{\tau_{i}(a_{\alpha})}\Pi_{a_{\alpha}}^{\alpha}
\end{equation}
for $\tau_{i}(a_{\alpha})\in \{0,1\}$ which is a function of both $s_{i}^{\alpha}$ and the outcome $a_{\alpha}$. The same can be done for the operator $s_{j}$ which gives us two stabilizing relations for $\sigma_{a_{1},\dots,a_{N}}$
\begin{align}
\begin{split}
\sigma_{a_{1},\dots,a_{N}} &= \left(\prod_{\alpha \in A}(-1)^{\tau_{i}(a_{\alpha})}\right) s_{i}^{(\{\alpha_{1},\alpha_{2}\})} \sigma_{a_{1},\dots,a_{N}}=\tilde{s}_{i}\sigma_{a_{1},\dots,a_{N}},\\
\sigma_{a_{1},\dots,a_{N}} &=\left(\prod_{\alpha \in A}(-1)^{\tau_{j}(a_{\alpha})}\right) s_{j}^{(\{\alpha_{1},\alpha_{2}\})} \sigma_{a_{1},\dots,a_{N}}=\tilde{s}_{j}\sigma_{a_{1},\dots,a_{N}},
\end{split}
\end{align}
where we omit the dependence of $\tilde{s}_{i}$, $\tilde{s}_{j}$ on the measurement results $a_{1},\dots,a_{N}$ to simplify the notation. By the virtue of Eq. \eqref{eq:lem_ncom} we have that $\tilde{s}_{i}$ and $\tilde{s}_{j}$ have to be independent. Then a stabilizer subspace $V_{\mathbb{S}}$ corresponding to the stabilizer $\mathbb{S}=\langle \tilde{s}_{i},\tilde{s}_{j}\rangle$ is one-dimensional \cite{GHEORGHIU2014505}, i.e., the above relations uniquely identify the state $\sigma_{a_{1},\dots,a_{N}}$.

Since $\tilde{s}_{i}$ and $\tilde{s}_{j}$ are a two-fold tensor product of Pauli matrices, such that $\tilde{s}_{i}^{\alpha_{1}}$ and $\tilde{s}_{j}^{\alpha_{1}}$ anticommute, we can easily find unitaries $U_{\alpha_{1}}$ and $U_{\alpha_{2}}$ such that
\begin{equation}
U_{\alpha_{1}}\otimes U_{\alpha_{2}} \tilde{s}_{i} U_{\alpha_{1}}^{\dagger}\otimes U_{\alpha_{2}}^{\dagger}=X\otimes X,\qquad U_{\alpha_{1}}\otimes U_{\alpha_{2}} \tilde{s}_{j} U_{\alpha_{1}}^{\dagger}\otimes U_{\alpha_{2}}^{\dagger}=Z\otimes Z,
\end{equation}
therefore
\begin{equation}
U_{\alpha_{1}}\otimes U_{\alpha_{2}} \sigma_{a_{1},\dots,a_{N}} U_{\alpha_{1}}^{\dagger}\otimes U_{\alpha_{2}}^{\dagger}= \ket{\phi_{+}}\! \bra{\phi_{+}}.
\end{equation}
The last note is that since $\tilde{s}_{i}$ and $\tilde{s}_{j}$  depend on the measurement results $a_{1},\dots,a_{N}$, so do $U_{\alpha_{1}}$ and $U_{\alpha_{2}}$. Nonetheless, for all measurement results we get a maximally entangled state for all $\rho\in \mathcal{B}(V_{\mathbb{S}})$, and since this holds true for all pairs of parties $\alpha_{1},\alpha_{2}\in[N]$, we have, by virtue of Lemma \ref{lem:two_qubits}, that $V_{\mathcal{S}}$ is MFNL.
\end{proof}

Let us note here that Thorem \ref{thm:mfnl} cannot be generalized to stabilizers of a higher local dimension. A simple example of a stabilizer illustrating this problem is $\mathbb{S}=\langle \mathbf{X}\otimes \mathbf{X}\otimes \mathbf{X}, \mathbf{Z}\otimes \mathbf{Z}\otimes \mathbf{Z} \rangle$ for $d=3$, where
\begin{equation}
\mathbf{X}=\sum_{j=0}^{d-1}\ket{j+1}\!\! \bra{j},\quad \mathbf{Z}= \sum_{j=0}^{d-1} \exp(2\pi \mathbb{i} j/d) \ket{j}\!\! \bra{j}
\end{equation}
are the generalized Pauli matrices. By the result of Ref. \cite{Makuta2023fullynonpositive} [see Corollary 1 therein], $V_{\mathbb{S}}$ associated with this stabilizer is GME. On the other hand, it follows that for any two operators $s_{i},s_{j} \in \mathbb{S}$ we either have $[s_{i}^{(\alpha)},s_{j}^{(\alpha)}]=0$ or $[s_{i}^{(\alpha)},s_{j}^{(\alpha)}]\neq 0$ for all $\alpha\in\{1,2,3\}$, which disagrees with Lemma \ref{lem:two_qubits}. This of course does not imply that GME qudit stabilizer subspaces are not MFNL or GMNL, but rather it suggests that a different approach must be used for $d>2$. 

\section{Proof of Theorem \ref{thm:inequality}}\label{app:ineq}

In this section, we prove Theorem \ref{thm:inequality}. To this end, let us first recall the generalized EPR-2 decomposition \cite{EPR2,Almeida_2010}. Precisely, every behavior $\mathcal{P}=\{P(\mathbf{a}| \mathbf{x})\}_{\mathbf{a},\mathbf{x}}$ can be decomposed as
\begin{equation}\label{eq:behavior_decomposition}
P(\mathbf{a}| \mathbf{x})=\sum_{t\in T}p_{t} P_{t}(\mathbf{a}| \mathbf{x})+p_{NL}P_{NL}(\mathbf{a}| \mathbf{x}),
\end{equation}
where $T$ is the set of all possible partitions $t=\{S_{i}\}_{i=1}^{m}$ of the set $[N]$ into $m$ nonempty subsets $S_{i}$ for any $m\in\{2,\ldots,N\}$, $P_{NL}(\mathbf{a}|\mathbf{x})$ is a probability distribution that is nonlocal across any bipartition, and $\mathcal{P}_{t}(\mathbf{a}|\mathbf{x})$ is a probability distribution that is local across any bipartition $Q|\overline{Q}$ such that $Q=\bigcup_{i\in I} S_{i}$ for any subset $I\subset[m]$, and nonlocal across any other bipartition. We denote by $\tilde{p}_{NL}$ the minimal $p_{NL}$ over all such decompositions, and we use the notation $\tilde{p}_{t}$ for a decomposition such that $p_{NL}=\tilde{p}_{NL}$.

Next, let us consider two parties labelled $\alpha$ and $\bar{\alpha}$ and the corresponding behavior $\{P(a_{\alpha},a_{\overline{\alpha}}|\mathbf{a}_{R},\mathbf{x})\}_{a_{\alpha},a_{\overline{\alpha}},x_{\alpha},x_{\overline{\alpha}}}$ conditioned on the outcomes observed by the remaining parties $R=[N]\setminus\{\alpha,\bar{\alpha}\}$ upon performing measurements $\mathbf{x}_R$. We then consider an EPR-2 decomposition of that behavior obtained from Eq. \eqref{eq:behavior_decomposition}:
\begin{equation}\label{eq:epr-2_alpha}
P(a_{\alpha},a_{\overline{\alpha}}|\mathbf{a}_{R},\mathbf{x}) = p_{L}^{\alpha|\overline{\alpha}} P_{L}^{\alpha|\overline{\alpha}}(a_{\alpha},a_{\overline{\alpha}}|\mathbf{a}_{R},\mathbf{x}) + p_{NL}^{\alpha|\overline{\alpha}} P_{NL}^{\alpha|\overline{\alpha}}(a_{\alpha},a_{\overline{\alpha}}|\mathbf{a}_{R},\mathbf{x}),
\end{equation}
where $R=[N]\setminus\{\alpha,\overline{\alpha}\}$, $P_{L}^{\alpha|\overline{\alpha}}(\mathbf{a}| \mathbf{x})$ is a probability distribution local across any bipartition $Q|\overline{Q}$ such that $\alpha\in Q$, $\overline{\alpha}\in \overline{Q}$, $P_{NL}^{\alpha|\overline{\alpha}}(\mathbf{a}| \mathbf{x})$ is a probability distribution nonlocal across any bipartition $Q|\overline{Q}$ such that $\alpha\in Q$, $\overline{\alpha}\in \overline{Q}$, and $p_{L}^{\alpha|\overline{\alpha}}+p_{NL}^{\alpha|\overline{\alpha}}=1$.

We can now proceed with the proof of Theorem \ref{thm:inequality}, which we state below for convenience.
\begin{thm}
Consider a state $\rho$ and a pair of parties $\alpha,\bar{\alpha}$. Let then  
\begin{equation}
\tilde{p}_{NL}^{\alpha|\overline{\alpha}}=\max_{\mathbf{x}_{R}} \min_{\mathbf{a}_{R}}\tilde{p}_{NL}^{\alpha|\overline{\alpha}}(\mathbf{a}_{R},\mathbf{x}_{R}),
\end{equation}
where $\tilde{p}_{NL}^{\alpha|\overline{\alpha}}(\mathbf{a}_{R},\mathbf{x}_{R})$ is the nonlocality content $\tilde{p}_{NL}$ of a bipartite state shared by parties $\alpha,\overline{\alpha}$ 
that corresponds to the outcomes $\mathbf{a}_{R}$ observed by the remaining parties 
$R=[N]\setminus{\alpha,\bar{\alpha}}$ after performing the local measurements 
$\mathbf{x}_{R}$ on their shares of the state $\rho$. 
Then, $\tilde{p}_{NL}$ of $\rho$ is lower-bounded by
\begin{equation}
\tilde{p}_{NL}\geqslant 1 - \frac{1}{N-1}\sum_{\alpha=1}^{N}\sum_{\overline{\alpha}>\alpha}(1-\tilde{p}_{NL}^{\alpha|\overline{\alpha}}).
\end{equation}
\end{thm}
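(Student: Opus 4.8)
The plan is to start from an optimal generalized EPR-2 decomposition of the global behavior $\mathcal{P}$, that is, Eq.~\eqref{eq:behavior_decomposition} with $p_{NL}=\tilde{p}_{NL}$ and weights $\tilde{p}_t$ on the partition terms $P_t$, and to reduce the statement to a collection of per-pair inequalities of the form $\sum_{t\in T_{\alpha,\overline{\alpha}}}\tilde{p}_t\leqslant 1-\tilde{p}_{NL}^{\alpha|\overline{\alpha}}$, where $T_{\alpha,\overline{\alpha}}\subset T$ is the set of partitions placing $\alpha$ and $\overline{\alpha}$ in different blocks. Summing these over all pairs and exchanging the order of summation turns the left-hand side into $\sum_t \tilde{p}_t\, N_t$, with $N_t$ the number of pairs separated by $t$. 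The argument is then closed by the elementary combinatorial fact that any partition of $[N]$ into at least two blocks separates at least $N-1$ pairs: writing the block sizes as $n_1,\dots,n_m$ one has $N_t=\tfrac12(N^2-\sum_i n_i^2)\geqslant N-1$, since $\sum_i n_i^2$ is maximized, under $\sum_i n_i=N$ and $m\geqslant 2$, by a single block of size $N-1$ together with one singleton. This gives $(N-1)(1-\tilde{p}_{NL})=(N-1)\sum_t\tilde{p}_t\leqslant\sum_{\alpha<\overline{\alpha}}(1-\tilde{p}_{NL}^{\alpha|\overline{\alpha}})$, which is precisely the claimed bound after rearranging.

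The core of the argument is establishing the per-pair inequality. Fixing $\alpha,\overline{\alpha}$ and a measurement setting $\mathbf{x}_R$ for the remaining parties $R=[N]\setminus\{\alpha,\overline{\alpha}\}$, I would condition the global decomposition on an outcome string $\mathbf{a}_R$. The key observation is that if a partition $t$ separates $\alpha$ from $\overline{\alpha}$, then $P_t$ is local across some bipartition $Q|\overline{Q}$ with $\alpha\in Q$ and $\overline{\alpha}\in\overline{Q}$; since the parties of $R$ split into the two sides of this bipartition, conditioning on their outcomes leaves a behavior on $\alpha,\overline{\alpha}$ that still factorizes across $\alpha|\overline{\alpha}$ and hence is local in the bipartite sense. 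Consequently, gathering all $t\in T_{\alpha,\overline{\alpha}}$ into the local part and the remaining partition terms together with $P_{NL}$ into the nonlocal part yields a valid bipartite EPR-2 decomposition of the conditioned behavior. Because $\tilde{p}_{NL}^{\alpha|\overline{\alpha}}(\mathbf{a}_R,\mathbf{x}_R)$ is by definition the minimal nonlocal weight of such a decomposition, it is bounded above by the exhibited nonlocal weight, equivalently $\sum_{t\in T_{\alpha,\overline{\alpha}}}\tilde{p}_t^{(\mathbf{a}_R)}\leqslant 1-\tilde{p}_{NL}^{\alpha|\overline{\alpha}}(\mathbf{a}_R,\mathbf{x}_R)$, where $\tilde{p}_t^{(\mathbf{a}_R)}=\tilde{p}_t P_t(\mathbf{a}_R|\mathbf{x}_R)/P(\mathbf{a}_R|\mathbf{x}_R)$ are the conditional weights.

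The remaining task is to remove the dependence on $\mathbf{a}_R$ and $\mathbf{x}_R$ and recover the max-min quantity of the theorem. Averaging the last inequality over $\mathbf{a}_R$ with weights $P(\mathbf{a}_R|\mathbf{x}_R)$ makes the normalization cancel, so the left-hand side ``un-conditions'' to $\sum_{t\in T_{\alpha,\overline{\alpha}}}\tilde{p}_t\sum_{\mathbf{a}_R}P_t(\mathbf{a}_R|\mathbf{x}_R)=\sum_{t\in T_{\alpha,\overline{\alpha}}}\tilde{p}_t$, while bounding the right-hand side via $\tilde{p}_{NL}^{\alpha|\overline{\alpha}}(\mathbf{a}_R,\mathbf{x}_R)\geqslant\min_{\mathbf{a}_R}\tilde{p}_{NL}^{\alpha|\overline{\alpha}}(\mathbf{a}_R,\mathbf{x}_R)$ yields $\sum_{t\in T_{\alpha,\overline{\alpha}}}\tilde{p}_t\leqslant 1-\min_{\mathbf{a}_R}\tilde{p}_{NL}^{\alpha|\overline{\alpha}}(\mathbf{a}_R,\mathbf{x}_R)$. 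Since the left-hand side is independent of $\mathbf{x}_R$, minimizing the right-hand side over $\mathbf{x}_R$ replaces it by $1-\max_{\mathbf{x}_R}\min_{\mathbf{a}_R}\tilde{p}_{NL}^{\alpha|\overline{\alpha}}(\mathbf{a}_R,\mathbf{x}_R)=1-\tilde{p}_{NL}^{\alpha|\overline{\alpha}}$, completing the per-pair inequality and hence the proof.

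I expect the main obstacle to lie in the conditioning step: one must verify carefully that conditioning each local term $P_t$ on the outcomes of $R$ genuinely produces a non-signaling behavior on $\alpha,\overline{\alpha}$ admitting a local hidden-variable model across $\alpha|\overline{\alpha}$ (with the hidden-variable distribution appropriately reweighted), and that this assembly constitutes a legitimate bipartite EPR-2 decomposition against which the minimal $\tilde{p}_{NL}^{\alpha|\overline{\alpha}}(\mathbf{a}_R,\mathbf{x}_R)$ can be compared. Correctly tracking the max-min quantifiers alongside the Bayesian renormalization in the averaging step is the other delicate point, whereas the combinatorial counting of separated pairs is straightforward.
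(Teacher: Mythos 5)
Your proposal is correct and follows essentially the same route as the paper: both start from the optimal generalized EPR-2 decomposition of the global behavior, condition on the outcomes $\mathbf{a}_R$ to induce a bipartite EPR-2 decomposition for each pair $\alpha,\overline{\alpha}$, and extract the per-pair inequality $\sum_{t\in T_{L}^{\alpha,\overline{\alpha}}}\tilde{p}_{t}\leqslant 1-\tilde{p}_{NL}^{\alpha|\overline{\alpha}}$ with the same treatment of the averaging over $\mathbf{a}_R$ and of the max--min quantifiers. The only difference is the final bookkeeping: you sum the per-pair inequality symmetrically over all pairs and invoke the bound $N_t=\tfrac12\left(N^2-\sum_i n_i^2\right)\geqslant N-1$ on the number of pairs separated by a partition, whereas the paper singles out party $1$, passes to complements via normalization, and uses the equivalent count $(|S_1|-1)\leqslant(|S_1|-1)(N-|S_1|)$; both counting arguments are valid and yield the identical bound (yours is arguably the more direct of the two).
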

\begin{proof}
Let us choose two parties $\alpha,\overline{\alpha}\in [N]$ such that $\alpha \neq \overline{\alpha}$. Let us consider the decomposition \eqref{eq:behavior_decomposition} for which $p_{NL}=\tilde{p}_{NL}$. Then, we can sum Eq. \eqref{eq:behavior_decomposition} over all $a_{\alpha}, a_{\overline{\alpha}}$ to get
\begin{equation}\label{eq:behavior_decomposition_without_alpha}
P(\mathbf{a}_{R}| \mathbf{x}_{R})=\sum_{t\in T}\tilde{p}_{t} P_{t}(\mathbf{a}_{R}| \mathbf{x}_{R})+\tilde{p}_{NL}P_{NL}(\mathbf{a}_{R}| \mathbf{x}_{R}),
\end{equation}
where $R=[N]\setminus\{\alpha,\overline{\alpha}\}$, and we used the non-signaling principle to remove the dependence on $x_{\alpha},x_{\overline{\alpha}}$. On the other hand, for every $P(\mathbf{a}|\mathbf{x})$, including $P_{t}(\mathbf{a}|\mathbf{x})$ and $P_{NL}(\mathbf{a}|\mathbf{x})$, we have
\begin{equation}
P(\mathbf{a}| \mathbf{x})= P(a_{\alpha},a_{\overline{\alpha}}|\mathbf{a}_{R},\mathbf{x})P(\mathbf{a}_{R}| \mathbf{x}_{R}),
\end{equation}
where we once again made use of the non-signaling assumption. Substitution of the above to Eq. \eqref{eq:behavior_decomposition} yields
\begin{equation}\label{eq:behavior_decomposition_alpha}
P(a_{\alpha},a_{\overline{\alpha}}|\mathbf{a}_{R},\mathbf{x})=\frac{1}{P(\mathbf{a}_{R}|\mathbf{x}_{R})}\left(\sum_{t\in T}\tilde{p}_{t} P_{t}(a_{\alpha},a_{\overline{\alpha}}|\mathbf{a}_{R},\mathbf{x})P_{t}(\mathbf{a}_{R}|\mathbf{x}_{R})+\tilde{p}_{NL}P_{NL}(a_{\alpha},a_{\overline{\alpha}}|\mathbf{a}_{R},\mathbf{x})P_{NL}(\mathbf{a}_{R}|\mathbf{x}_{R})\right),
\end{equation}
which holds true for all $P(\mathbf{a}_{R}|\mathbf{x}_{R})\neq 0$. Next, let us take 
\begin{equation}\label{eq:prob_assignment}
P_{L}^{\alpha|\overline{\alpha}}(\mathbf{a}| \mathbf{x})= \frac{1}{\eta_{L}^{\alpha|\overline{\alpha}}}\sum_{t\in T_{L}^{\alpha,\overline{\alpha}}}\tilde{p}_{t} P_{t}(\mathbf{a}|\mathbf{x}),\qquad P_{NL}^{\alpha|\overline{\alpha}}(\mathbf{a}| \mathbf{x})= \frac{1}{\eta_{NL}^{\alpha|\overline{\alpha}}}\left(\sum_{t\in T_{NL}^{\alpha,\overline{\alpha}}}\tilde{p}_{t} P_{t}(\mathbf{a}|\mathbf{x}) + \tilde{p}_{NL}P_{NL}(\mathbf{a}|\mathbf{x})\right),
\end{equation}
where 
\begin{equation}
\eta_{L}^{\alpha|\overline{\alpha}}=\sum_{t\in T_{L}^{\alpha,\overline{\alpha}}}\tilde{p}_{t}, \qquad \eta_{NL}^{\alpha|\overline{\alpha}}=\sum_{t\in T_{NL}^{\alpha,\overline{\alpha}}}\tilde{p}_{t}+\tilde{p}_{NL},
\end{equation}
$T_{L}^{\alpha,\overline{\alpha}}$ is the set of partitions $t=\{S_{1},\ldots,S_{m}\}$ of $[N]$ such that $\alpha$ and $\overline{\alpha}$ are in different subsets $S_{i}$, and $T_{NL}^{\alpha,\overline{\alpha}}=T\setminus T_{L}^{\alpha,\overline{\alpha}}$. Notice, that $P_{L}^{\alpha|\overline{\alpha}}(\mathbf{a}| \mathbf{x})$ and $P_{NL}^{\alpha|\overline{\alpha}}(\mathbf{a}| \mathbf{x})$ are defined in a way as to satisfy the conditions of the decomposition \eqref{eq:epr-2_alpha}. Moreover, the assignment \eqref{eq:prob_assignment} allows us to rewrite \eqref{eq:behavior_decomposition} as
\begin{equation}\label{eq:induced_decomposition}
P(a_{\alpha},a_{\overline{\alpha}}|\mathbf{a}_{R},\mathbf{x})=\eta_{L}^{\alpha|\overline{\alpha}}\frac{P_{L}(\mathbf{a}_{R}|\mathbf{x}_{R})}{P(\mathbf{a}_{R}|\mathbf{x}_{R})} P_{L}^{\alpha|\overline{\alpha}}(a_{\alpha},a_{\overline{\alpha}}|\mathbf{a}_{R},\mathbf{x})+\eta_{NL}^{\alpha|\overline{\alpha}}\frac{P_{NL}(\mathbf{a}_{R}|\mathbf{x}_{R})}{P(\mathbf{a}_{R}|\mathbf{x}_{R})} P_{NL}^{\alpha|\overline{\alpha}}(a_{\alpha},a_{\overline{\alpha}}|\mathbf{a}_{R},\mathbf{x}),
\end{equation}
and since by virtue of Eq. \eqref{eq:behavior_decomposition_without_alpha} we have
\begin{equation}
\eta_{L}^{\alpha|\overline{\alpha}}\frac{P_{L}(\mathbf{a}_{R}|\mathbf{x}_{R})}{P(\mathbf{a}_{R}|\mathbf{x}_{R})}+\eta_{NL}^{\alpha|\overline{\alpha}}\frac{P_{NL}(\mathbf{a}_{R}|\mathbf{x}_{R})}{P(\mathbf{a}_{R}|\mathbf{x}_{R})}=1,
\end{equation}
Eq. \eqref{eq:induced_decomposition} is a proper decomposition \eqref{eq:epr-2_alpha} for a given $\mathbf{a}_{R}$ and $\mathbf{x}_{R}$. We then have
\begin{equation}
\tilde{p}_{L}^{\alpha|\overline{\alpha}}(\mathbf{a}_{R},\mathbf{x}_{R}) \geqslant\eta_{L}^{\alpha|\overline{\alpha}}\frac{P_{L}(\mathbf{a}_{R}|\mathbf{x}_{R})}{P(\mathbf{a}_{R}|\mathbf{x}_{R})},
\end{equation}
where by $\tilde{p}_{L}^{\alpha|\overline{\alpha}}(\mathbf{a}_{R},\mathbf{x}_{R})$ we specify that the maximal $p_{L}^{\alpha|\overline{\alpha}}$ depends on $\mathbf{a}_{R}$ and $\mathbf{x}_{R}$. Trivially, we have
\begin{equation}
\tilde{p}_{L}^{\alpha|\overline{\alpha}}(\mathbf{x}_{R})\coloneqq \max_{\mathbf{a}_{R}} \tilde{p}_{L}^{\alpha|\overline{\alpha}}(\mathbf{a}_{R},\mathbf{x}_{R})\geqslant \eta_{L}^{\alpha|\overline{\alpha}}\frac{P_{L}(\mathbf{a}_{R}|\mathbf{x}_{R})}{P(\mathbf{a}_{R}|\mathbf{x}_{R})}.
\end{equation}
We multiply the above inequality by $P(\mathbf{a}_{R}|\mathbf{x}_{R})$ and sum over $\mathbf{a}_{R}$ which yields
\begin{equation}
\tilde{p}_{L}^{\alpha|\overline{\alpha}}(\mathbf{x}_{R}) \geqslant\eta_{L}^{\alpha|\overline{\alpha}}.
\end{equation}
This inequality holds for every $\mathbf{x}_{R}$, and since the right side does not depend on $\mathbf{x}_{R}$, we can minimize over $\mathbf{x}_{R}$ to get
\begin{equation}\label{ineq:p_L}
\tilde{p}_{L}^{\alpha|\overline{\alpha}}\coloneqq \min_{\mathbf{x}_{R}}\max_{\mathbf{a}_{R}} \tilde{p}_{L}^{\alpha|\overline{\alpha}}(\mathbf{a}_{R},\mathbf{x}_{R})\geqslant \eta_{L}^{\alpha|\overline{\alpha}}=\sum_{t\in T_{L}^{\alpha,\overline{\alpha}}}\tilde{p}_{t}.
\end{equation}
Let us take $\overline{\alpha}=1$ and let us sum the above inequality over all $\alpha\in\{2,\ldots,N\}$
\begin{equation}
\sum_{\alpha=2}^{N}\tilde{p}_{L}^{\alpha|1}\geqslant \sum_{\alpha=2}^{N}\sum_{t\in T_{L}^{\alpha,1}}\tilde{p}_{t}=(N-1)(1-\tilde{p}_{NL}) - \sum_{\alpha=2}^{N}\sum_{t\in T_{NL}^{\alpha,1}}\tilde{p}_{t},
\end{equation}
where the equality follows from the normalization condition. We want to bound the second term on the right-hand side of the above inequality by the function of coefficients $\tilde{p}_{L}^{\alpha|\overline{\alpha}}$. To this end, we first need to show that
\begin{equation}\label{ineq:sums}
\sum_{\alpha=2}^{N}\sum_{t\in T_{NL}^{\alpha,1}}\tilde{p}_{t} \leqslant \sum_{\alpha=2}^{N}\sum_{\overline{\alpha}>\alpha}\sum_{t\in T_{L}^{\alpha,\overline{\alpha}}}\tilde{p}_{t}.
\end{equation}
Let us focus on the individual terms $\tilde{p}_{t}$.  For a given $t=\{S_{1},\ldots,S_{m}\}$ without the loss of generality we can take $1\in S_{1}$. Then $\tilde{p}_{t}$ appears on the left-hand side of the inequality $(|S_{1}|-1)$ times - once for each $\alpha\in S_{1}\setminus \{ 1\}$. On the other hand, the same term $\tilde{p}_{t}$ appears on the right-hand side of the inequality $(|S_{1}|-1)(N-|S_{1}|)$ times - once for each pair $\alpha,\overline{\alpha}$ such that $\alpha \in S_{1}\setminus \{1\}$, and $\overline{\alpha}\notin  S_{1}$. Since $1\leqslant |S_{1}|\leqslant N-1$, it follows that $(|S_{1}|-1)\leqslant(|S_{1}|-1)(N-|S_{1}|)$ for each $\tilde{p}_{t}$ which implies that Ineq. \eqref{ineq:sums} holds true. Therefore, we have
\begin{equation}
\sum_{\alpha=2}^{N}\tilde{p}_{L}^{\alpha|1}\geqslant(N-1)(1-\tilde{p}_{NL}) - \sum_{\alpha=2}^{N}\sum_{\overline{\alpha}>\alpha}\sum_{t\in T_{L}^{\alpha,\overline{\alpha}}}\tilde{p}_{t}
\geqslant (N-1)(1-\tilde{p}_{NL})-\sum_{\alpha=2}^{N}\sum_{\overline{\alpha}>\alpha}\tilde{p}_{L}^{\alpha|\overline{\alpha}},
\end{equation}
where the second inequality follows from Ineq. \eqref{ineq:p_L}. Substituting $\tilde{p}_{L}^{\alpha|\overline{\alpha}}=1-\tilde{p}_{NL}^{\alpha|\overline{\alpha}}$ we arrive at the lower bound of $\tilde{p}_{NL}$ expressed as a function of $\tilde{p}_{NL}^{\alpha|\overline{\alpha}}$.
\begin{eqnarray}
\tilde{p}_{NL}&\geqslant& 1 - \frac{1}{N-1}\sum_{\alpha=1}^{N}\sum_{\overline{\alpha}>\alpha}(1-\tilde{p}_{NL}^{\alpha|\overline{\alpha}})\nonumber\\
&=&\frac{1}{N-1}\sum_{\alpha=1}^{N}\sum_{\overline{\alpha}>\alpha}\tilde{p}_{NL}^{\alpha|\overline{\alpha}}-\left(\frac{N}{2}-1\right).
\end{eqnarray}
\end{proof}

Let us now discuss the application of Theorem \ref{thm:inequality}. In order to design an experiment utilizing the bound provided by this theorem it would be convenient to have a simple way of calculating $p_{NL}^{\alpha|\overline{\alpha}}$ directly from the experimental data. What is more, we also need to find a choice of measurement settings $\mathbf{x}$ that would give us a high value of $p_{NL}^{\alpha|\overline{\alpha}}$. Both of these conditions are met by the chained Bell inequality \cite{chained} 
\begin{equation}
I_{n,d} = \sum_{j=1}^{n-1}(\langle [A_{j}-B_{j}]\rangle +\langle [B_{j}-A_{j+1}]\rangle) +\langle [A_{n}-B_{n}]\rangle + \langle [B_{n}-A_{1}-1]\rangle \geqslant d-1,
\end{equation}
where $\langle M \rangle = \sum_{i=1}^{d-1}iP(M=i)$ and $[M]$ denotes $M \mod d$. In Ref. \cite{Barrett_2006} the authors found the optimal measurement strategy for violation of this inequality by the maximally entangled state $\ket{\phi_{+}}=1/\sqrt{d}\sum_{j=0}^{d-1}\ket{jj}$, and they discovered that the value of the Bell functional $I_{n,d}$ constitutes a lower bound on the factor $\tilde{p}_{NL}^{\alpha|\overline{\alpha}}$
\begin{equation}
\tilde{p}_{NL}^{\alpha|\overline{\alpha}}\geqslant1- \frac{I_{n,d}}{d-1}.
\end{equation}
We can use this inequality to find a lower bound on the multipartite $\tilde{p}_{NL}$ given the value of $I_{n,d}^{\alpha|\overline{\alpha}}$ originating form a chained Bell test each pair of qudits $\alpha|\overline{\alpha}$
\begin{equation}\label{ineq:p_NL>chained}
\tilde{p}_{NL}\geqslant 1 - \frac{1}{(N-1)(d-1)}\sum_{\alpha=1}^{N}\sum_{\overline{\alpha}>\alpha}I_{n,d}^{\alpha|\overline{\alpha}}.
\end{equation}

One interesting analysis that can be performed using this bound is to find a correspondence between the total number of measurement settings $m\geqslant n$ per party required for detection of GMNL, i.e., $\tilde{p}_{NL}>0$. To this end, we compute the maximal value of $I_{n,d}$ for each $n$, and assuming that this maximum is the same for each pair $\alpha|\overline{\alpha}$, we look for minimal $n$ such that
\begin{equation}
I_{n,d}<\frac{2(d-1)}{N},
\end{equation}
which is the value of $I_{n,d}$ that implies $\tilde{p}_{NL}>0$. 

Let us consider the case of measurement performed on states from a qubit stabilizer subspaces. For this specific example, we can determine the exact relationship between $m$ and $n$. Importantly for this derivation, the measurements that lead to the maximal violation of chained Bell inequality by the maximally entangled state are not symmetric under the exchange of parties. Therefore, to perform the chained Bell test for each pair $\alpha|\overline{\alpha}$ each party has to have access to both sets of measurements (apart from one party which can have only one of the sets of measurements). Additionally, each party has to have access to the Pauli measurements that are required for the generation of the maximally entangled state. Hence, for a chained Bell inequality for a given $n$ the total number of measurements required for detection of GMNL equals $m=2n+3$. See Fig. \ref{fig} where we computed the minimal $m$ that gives $\tilde{p}_{NL}$ for qubit stabilizer subspaces, as a function of $N\in\{4,\ldots,40\}$. 

Note, that we do not have to minimize the value of $\tilde{p}_{NL}$ over the states from the studied subspace as, under our chosen strategy, $\tilde{p}_{NL}$ will be the same for every state from the qubit stabilizer subspace. This of course follows from the fact that we can generate a maximally entangled state between any two qubits using any state from the subspace, and so $\tilde{p}_{NL}^{\alpha|\overline{\alpha}}$ is the same for any pair $\alpha|\overline{\alpha}$ and any state from the subspace.

\begin{figure}[ht]
\includegraphics[width=0.6\columnwidth]{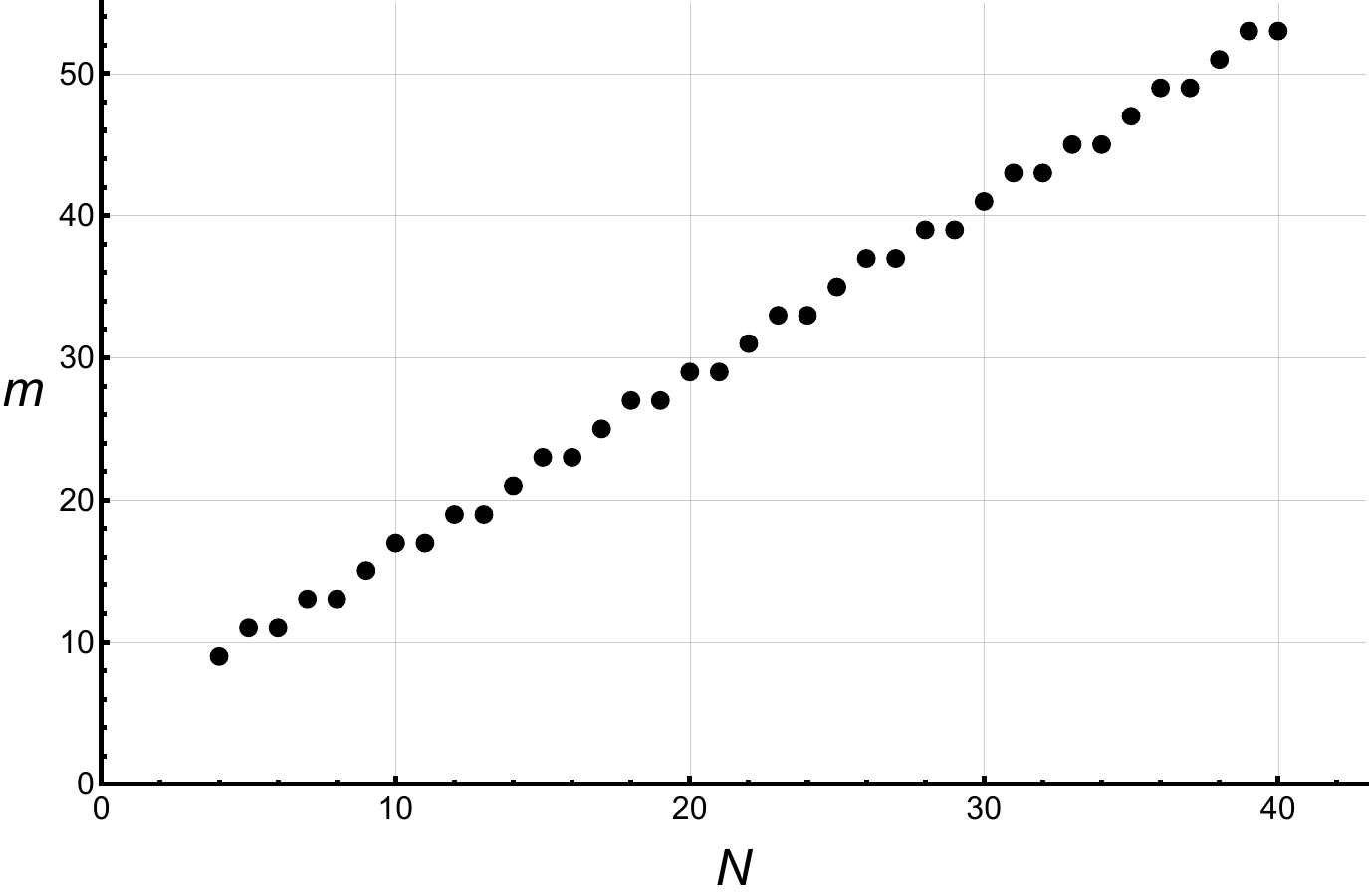}
\caption{The minimal number of measurement settings per party $m$ required for detecting GMNL for $N$ qubit stabilizer subspaces, using chained inequality, Theorem \ref{thm:inequality}, and an assumption that $\tilde{p}_{NL}^{\alpha|\overline{\alpha}}$ is the same for all $\alpha,\overline{\alpha}$.}
\label{fig}
\end{figure}

Furthermore, Ineq. \eqref{ineq:p_NL>chained} can be also used to find the lower bound on $\tilde{p}_{NL}$ as a function of $m$ for a given $N$. In Fig. \ref{fig2} we present this exact relationship for measurement on a state from the five-qubit code \cite{Bennett_1996, PhysRevLett.77.198}.

\begin{figure}[ht]
\includegraphics[width=0.6\columnwidth]{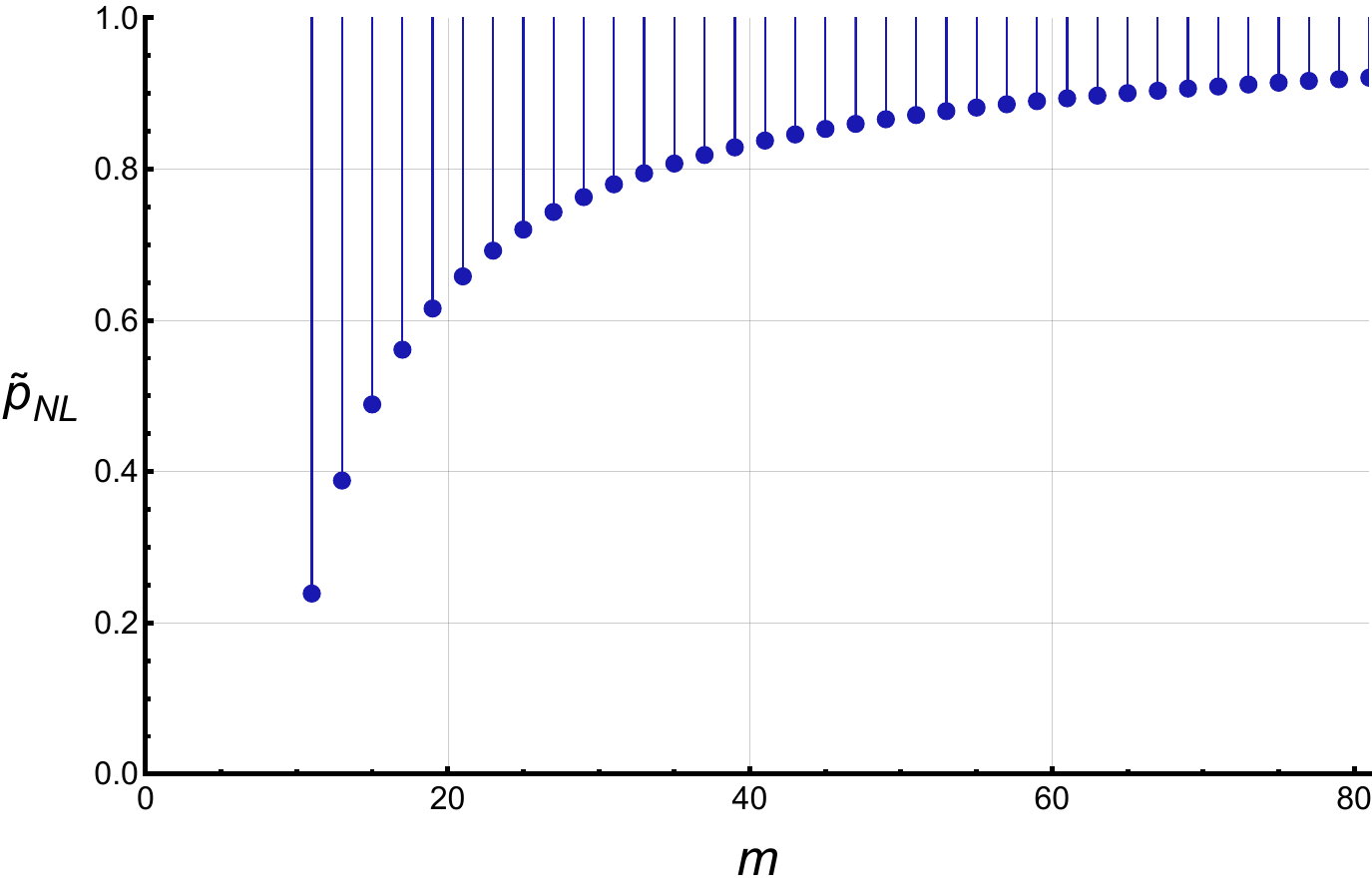}
\caption{The permitted values of $\tilde{p}_{NL}$ by the maximal violation of the chained Bell inequality as a function of the total number of measurements per party $m$, for $N=5$. The blue dots represent the minimal value of $\tilde{p}_{NL}$ that can achieve the maximal violation of the chained Bell inequality for given $m$.}
\label{fig2}
\end{figure}

\end{document}